\DeclareMathOperator{\tr}{tr}
\DeclareMathOperator{\Tr}{Tr}
\newcommand{\R}{\ensuremath{\mathbb{R}}}
\newcommand{\C}{\ensuremath{\mathbb{C}}}
\newcommand{\ket}[1]{\ensuremath{|#1\rangle}}
\newcommand{\bra}[1]{\ensuremath{\langle#1|}}
\newcommand{\ketbra}[2]{\ensuremath{\ket{#1} \! \bra{#2}}}
\newcommand{\proj}[1]{\ensuremath{\ketbra{#1}{#1}}}
\newcommand{\braket}[2]{\ensuremath{\langle{#1}|{#2}\rangle}}
\newcommand{\Id}{{\rm 1\hspace{-0.9mm}l}}
\newcommand{\XX}{\mathcal{X}}
\newcommand{\YY}{\mathcal{Y}}
\newcommand{\ZZ}{\mathcal{Z}}
\newcommand{\s}{\mathcal{S}}
\newcommand{\CC}{\mathcal{C}}
\newcommand{\A}{\mathcal{A}}
\newcommand{\B}{\mathcal{B}}
\newcommand{\VV}{\mathcal{V}}
\newtheorem{definition}{Definition}
\newtheorem{theorem}{Theorem}
\newtheorem{corollary}{Corollary}
\newtheorem{proposition}{Proposition}
\newtheorem*{rem*}{Remark}
\def\>{\rangle}
\def\<{\langle}
\title{Strategies for single-shot discrimination of process matrices}
\author{Paulina Lewandowska$^{1}$}
\author{Łukasz Pawela$^{1}$}
\author{Zbigniew Puchała$^{1}$}
\address{$^1$Institute of Theoretical and Applied Informatics, Polish 
Academy of Sciences, ul. Ba{\l}tycka 5, 44-100 Gliwice, Poland}
\begin{document}
\maketitle

\begin{abstract}
	The topic of causality has recently gained traction quantum information research.
This work examines the problem of single-shot discrimination between process matrices which are an universal method defining a causal structure.
 We provide an exact expression for the optimal probability of correct distinction.  In addition, we present an alternative way to achieve this expression by using the convex cone structure theory. We also express the discrimination task as semidefinite programming. 
Due to that, we have created the SDP calculating the distance between process matrices and we quantify it in terms of the trace norm. As a valuable by-product, the program finds an optimal realization of the discrimination task. We also find two classes of process matrices which can be distinguished perfectly. Our main result, however, is a consideration of the discrimination task for process matrices corresponding to quantum combs.  We study which strategy, adaptive or non-signalling, 
should be used during the discrimination task. 
We proved that no matter which strategy you choose, the probability of distinguishing two process matrices being a quantum comb is the same. 
\end{abstract}	

\section{Introduction}
 The topic of causality has remained a staple in quantum physics and quantum information theory for recent years.
The idea of a causal influence in quantum physics is best illustrated by considering two characters, Alice and Bob, preparing experiments in two separate laboratories.  Each of them receives a physical system and performs an operation on it. After that, they send their respective system out of the laboratory. In a causally ordered framework, there are three possibilities: Bob cannot signal to Alice, which means the choice of Bob's action cannot influence the statistics Alice records (denoted by $A \prec B$), Alice cannot signal to Bob  ($B \prec A$), or neither party can influence the other $(A || B)$.   A causally neutral formulation of quantum theory is described in terms of  quantum   combs~\cite{bisio2011quantum}.

One may wonder  if Alice's and Bob's action can influence each other. It might seem impossible, except in a world with closed time-like curves (CTCs)~\cite{godel1949example}.  But the existence of CTCs  implies some logical paradoxes, such as the  grandfather paradox~\cite{deutsch1994quantum}. Possible solutions have been proposed in which quantum mechanics and CTCs can exist and such paradoxes are avoided, but  modifying quantum theory into a nonlinear one~\cite{gisin1990weinberg}. A natural question arises: is it possible to keep the framework of linear quantum theory and still go beyond definite causal structures?

One such framework was proposed  by Oreshkov, Costa and Brukner ~\cite{oreshkov2012quantum}. They introduced  a new resource called  a process matrix -- a generalization of the notion of quantum state. This new approach has 
provided a consistent  representation of correlations in casually and non-causally related experiments.  Most interestingly, they have described a situation that two actions are neither causally ordered and one cannot say which 
action influences the second one.
Thanks to that, the term of causally non-separable (CNS)  structures started to correspond to superpositions of situations in which, roughly speaking, Alice can signal to Bob, and Bob can signal to Alice, jointly. A general overview of causal connection theory is described in \cite{brukner2014quantum}.

The indefinite causal structures could make a new aspect of quantum information
processing. 
This more general model of computation can outperform causal quantum computers in specific tasks, such as  learning  or discriminating  between two quantum channels \cite{bavaresco2021strict, quintino2022deterministic, bavaresco2022unitary}.   
The problem of discriminating quantum operations
is of the utmost importance in modern quantum information science.
Imagine we have an unknown operation hidden in a black box. We only have information that it is one of two operations. The goal is to determine an optimal strategy for this process that achieves the
highest possible probability of discrimination.
 For the case of  a single-shot discrimination scenario, researchers  have used  different approaches, with the possibility of using entanglement in order to perform an optimal protocol.
In~\cite{duan2007entanglement}, Authors have shown that in the task of discrimination of unitary channels, the entanglement is not necessary, whereas for quantum measurements \cite{d2001using, cao2015perfect, puchala2018strategies},  we need to use entanglement.  
 Considering   multiple-shot  discrimination scenarios, researchers  have utilized   parallel or adaptive approaches. 
   In the parallel case, 
  we establish that the
  discrimination between operations does not require pre-processing and post-processing. One example of such an approach is distinguishing unitary channels~\cite{duan2007entanglement}, or von Neumann measurements \cite{puchala2021multiple}. 
 The case when the black box can be
 used multiple times in an adaptive way was investigated by the authors of \cite{ wang2006unambiguous, krawiec2020discrimination},  who have proven that the use of adaptive strategy  and a general  notion of quantum combs
 can improve  discrimination.

 In this work, we study the problem of discriminating process matrices in a single-shot scenario. 
We obtain that the  probability of correct distinction  process matrices is strictly related to the Holevo-Helstrom theorem for quantum channels. 
Additionally, we  write this result as
a semidefinite program (SDP) which is numerically efficient. The SDP program allows us to find
an optimal discrimination strategy. 
We compare the effectiveness of the obtained strategy with the previously mentioned strategies.
The problem gets more complex in the case when we consider the non-causally ordered framework.  In this case, we consider the discrimination task between two process matrices having different causal orders.

This paper is organized as follows. In Section~\ref{sec-preliminaries}
 we introduce  necessary mathematical
framework. Section~\ref{sec-process-matrices} is dedicated to the concept of process matrices.  Section~\ref{sec-discrimination-prob} presents the discrimination task  between pairs of process matrices and calculate the exact probability of distinguishing them. 
Some examples of discrimination between  different classes of process matrices are presented in Section~\ref{sec-discrimination-example}. 
In Section~\ref{sec-free}, we consider the discrimination task between free process matrices, whereas in 
Section~\ref{sec-comb} we consider the discrimination task between process matrices being quantum combs. In 
Section~\ref{sec-perfect-discrim}, we show a particular class of process matrices having opposite causal structures which can be distinguished perfectly.
Finally, Section~\ref{spd-prob} and Section~\ref{sec-sdp-distance}
 are devoted to semidefinite programming, thanks to which, among other things, we obtain an optimal discrimination strategy.
In Section~\ref{sec-convex}, we analyze an alternative way to achieve this expression  using the convex cone structure theory.
Concluding remarks are presented in
the final Section~\ref{sec-conclusion}.
In the Appendix~\ref{app:convex}, we
provide technical details about the convex cone structure.

\section{Mathematical preliminaries }\label{sec-preliminaries}

Let us introduce the following notation. Consider two complex Euclidean spaces and denote them by  $\XX, \YY$. 
By  $\mathrm{L}(\XX , \YY)$ we denote 
the collection of all linear mappings of the form 
$A: \XX \rightarrow \YY$. 
As a shorthand put
$\mathrm{L}(\XX) \coloneqq \mathrm{L}(\XX, \XX).$ 
By $\mathrm{Herm}(\XX)$ we denote 
the set of Hermitian operators while the subset of $\mathrm{Herm}(\XX)$ consisting of  positive semidefinite
operators will be denoted by    $\mathrm{Pos}(\XX)$.   The set of
quantum states, that is positive semidefinite operators  $\rho$
such that $\tr\rho = 1$, will be denoted by $\Omega(\XX)$.
 An operator $U \in \mathrm{L}\left(\XX\right) $ is unitary if  it  satisfies
the equation $U U^\dagger = U^\dagger U = \Id_\XX$. 
The notation $\mathrm{U}\left(\XX\right)$ will be used to denote the set of  all unitary operators.
We will also need a linear
mapping of the form $\Phi: \mathrm{L}(\XX) \rightarrow \mathrm{L}(\YY)$ transforming $\text{L}(\XX)$ into $\text{L}(\YY)$.
The set of all linear mappings is denoted $\mathrm{M}(\XX, \YY)$. 
There
exists a bijection between set  $\mathrm{M}(\XX, \YY)$ 
and the set of operators $\mathrm{L}(\YY \otimes \XX)$ known as the 
Choi~\cite{choi1975completely} and Jamio{\l}kowski~\cite{jamiolkowski1972linear} isomorphism.
For a given linear mapping $\Phi_M:  \mathrm{L}(\XX) \rightarrow \mathrm{L}(\YY)$ corresponding  Choi matrix $M \in \mathrm{L}(\YY \otimes \XX) $ can be  explicitly written as 
\begin{equation}
M \coloneqq \sum_{i,j = 0}^{\dim(\XX) - 1} \Phi_M(\ketbra{i}{j}) \otimes \ketbra{i}{j}. \end{equation}
We will denote linear
mappings by $\Phi_M, \Phi_N, \Phi_R$ etc., whereas the
corresponding Choi matrices as plain symbols: $M, N, R$ etc. Let us consider a composition of mappings $\Phi_R = \Phi_N \circ \Phi_M$ where   $\Phi_N: \mathrm{L}(\ZZ) \rightarrow  \mathrm{L}(\YY)$ and $\Phi_M: \mathrm{L}(\XX) \rightarrow  \mathrm{L}(\ZZ)$ with Choi matrices $N \in \mathrm{L}(\ZZ \otimes \YY)$ and $M \in  \mathrm{L}(\XX \otimes \ZZ)$, respectively. 
Then, the Choi matrix of  $\Phi_R $  is given by~\cite{chiribella2009theoretical}
\begin{equation}
R = \tr_{\ZZ} \left[ \left(\Id_\YY \otimes M^{T_\ZZ}\right)\left( N \otimes \Id_\XX \right) \right], 
\end{equation}
where $M^{T_\ZZ} $ denotes the partial transposition of $M$ on the subspace $\ZZ$. 
 The above result can be expressed 
by introducing the notation of the link product of the operators 
$N$ and $M $ as \begin{equation}
N * M \coloneqq  \tr_{\ZZ} \left[ \left(\Id_\YY \otimes M^{T_\ZZ}\right)\left( N \otimes \Id_\XX \right) \right]. 
\end{equation}

Finally, we introduce a special subset of all mappings $\Phi$, called quantum channels, which are  completely positive
and trace preserving (CPTP).
In other words, the first condition
reads \begin{equation}
 (\Phi \otimes \mathcal{I_\ZZ})(X) \in   \mathrm{Pos}(\YY \otimes \ZZ)
\end{equation}
for all $ X \in  \mathrm{Pos}(\XX \otimes \ZZ) $ and $\mathcal{I_\ZZ}$ is an identity channel acts on  $\mathrm{L}(\ZZ)$ for any $\ZZ$, while the second condition reads
\begin{equation}
\tr(\Phi(X)) = \tr(X) 
\end{equation}
for all $X \in \mathrm{L}(\XX)$.

In this work we will consider a special class of quantum channels called  non-signaling channels (or causal channels)~~\cite{beckman2001causal, piani2006properties}. We say that $\Phi_N: \mathrm{L}(\XX_I \otimes \YY_I) \rightarrow \mathrm{L}(\XX_O \otimes \YY_O)$ 
is a non-signaling channel if its Choi operator satisfies the following conditions 
\begin{equation}
\begin{split}
&\tr_{\XX_O} (N) = \frac{\Id_{\XX_I}}{\dim(\XX_I)} \otimes \tr_{\XX_O\XX_1} (N), \\ &\tr_{\YY_O} (N) = \frac{\Id_{\YY_I}}{\dim(\YY_I)} \otimes \tr_{\YY_O\YY_1} (N).
\end{split}
\end{equation}	
It can be shown 
\cite{chiribella2013quantum} 
that each non-signaling channel is an affine combination
of product channels. More precisely, any non-signaling  channel $\Phi_N : \mathrm{L}(\XX_I \otimes \YY_I) \rightarrow \mathrm{L}(\XX_O \otimes \YY_O) $ can  be written as 
\begin{equation}
\Phi_N = \sum_i \lambda_i \Phi_{S_i} \otimes \Phi_{T_i} ,
\end{equation}
where  $\Phi_{S_i} : \mathrm{L}(\XX_I ) \rightarrow \mathrm{L}(\XX_O ) $ and $\Phi_{T_i}:  \mathrm{L}(\YY_I) \rightarrow \mathrm{L}( \YY_O) $ are quantum channels, $\lambda_i \in \R$ such that $\sum_i \lambda_i = 1$.  
For the rest of this paper, by $\mathbf{NS}(\XX_I \otimes \XX_O \otimes \YY_I \otimes \YY_O)$ we will denote 
the set of Choi matrices of non-signaling channels. 

The most general quantum
operations are represented by quantum instruments~\cite{nielsen2010quantum, davies1970operational},  that
is, collections of completely positive (CP) maps $\left\{ \Phi_{M_i} \right\}_i$  associated to all measurement outcomes, characterized by the property that $\sum_i \Phi_{M_i}$ is a quantum channel. 

We will also consider the concept of quantum network and tester~\cite{chiribella2008quantum}.
We say that  $\Phi_{R^{(N)}}$ is a deterministic quantum network (or quantum comb) if it is a concatenation of $N$ quantum channels and $
R^{(N)} \in \mathrm{L} \left( \bigotimes_{i=0}^{2N-1}  \XX_i \right)$
fulfills the following conditions
%
\begin{equation}
\begin{split}
  R^{(N)} & \ge 0,  \\
\tr_{\XX_{2k-1}} \left( R^{(k)} \right) & =  \Id_{\XX_{2k-2}} \otimes R^{(k-1)}, 
\end{split}
\end{equation}
where $R^{(k-1)} \in \mathrm{L} \left( \bigotimes_{i=0}^{2k-3}  \XX_i \right) $ is the Choi matrix of the reduced quantum comb with concatenation of $k-1$ quantum channels, $k = 2,\ldots, N$.  
We remind that a probabilistic quantum
network $\Phi_{S^{(N)}}$ is equivalent to a concatenation of $N$ completely positive trace non increasing
linear maps. Then, the Choi operator $S^{(N)}$ of $\Phi_{S^{(N)}}$ satisfies $0 \le S^{(N)} \le R^{(N)}$, where  $R^{(N)}$ is Choi matrix of a quantum comb. Finally, we recall the definition of a quantum tester.  A quantum tester is a collection  of  probabilistic quantum networks $\left\{ R_{i}^{(N)} \right\}_i$ whose sum is a quantum comb, that is $\sum_i R^{(N)}_i = R^{(N)}$, and additionally $\dim(\XX_0) = \dim(\XX_{2N-1}) = 1$.

	We will also use the Moore--Penrose pseudo--inverse by abusing notation $X^{-1} \in \mathrm{L}(\YY, \XX)$ for an operator $X 
	\in \mathrm{L}(\XX, \YY)$. Moreover, we introduce the vectorization operation of  $X$ defined by $|X \rangle \rangle = \sum_{i=0}^{\dim(\XX) -1} (X \ket{i}) \otimes \ket{i} $.

 \section{Process matrices}\label{sec-process-matrices}

This section introduces the formal definition of the process matrix with its characterization and intuition. 
Next, we present some classes of process matrices considered in this paper. 

Let us define the operator  $\prescript{}{\XX}{Y}$ as 
\begin{equation}
\prescript{}{\XX}{Y}  = \frac{\Id_\XX}{\dim(\XX)} \otimes \tr_\XX Y
\end{equation} for every $Y \in \mathrm{L}(\XX \otimes \ZZ)$, where $\ZZ$ is an arbitrary complex Euclidean space. 
We will also need the following projection operator \begin{equation}\label{proj}
L_V(W) =  \prescript{}{\A_O}{W} +  \prescript{}{\B_O}{W} - \prescript{}{\A_O\B_O}{W} - \prescript{}{\B_I\B_O}{W}  + \prescript{}{\A_O\B_I\B_O}{W} - \prescript{}{\A_I\A_O}{W} + \prescript{}{\A_O\A_I\B_O}{W}.
\end{equation} where $W \in \mathrm{Herm}(\A_I\otimes \A_O \otimes \B_I \otimes \B_O)$.

\begin{definition}\label{process-matrix-1}
	We say that $W \in \mathrm{Herm}(\A_I\otimes \A_O \otimes \B_I \otimes \B_O) $ is  a process matrix if it fulfills the following conditions
	\begin{equation}
	W \ge 0, \,\,\, W  = L_V(W), \,\,\ \tr(W) = \dim(\A_O) \cdot \dim(\B_O),
	\end{equation}
	where the projection operator $L_V$ 
	is defined by Eq.~\eqref{proj}.

\end{definition}
The set of all  process matrices  will be denoted by $\mathbf{W^{PROC}}$. In the upcoming  considerations, it will be more convenient to work with the equivalent characterization of 
process matrices which
can be found in~\cite{araujo2015witnessing}. 
\begin{definition}\label{process-matrix-2}
	We say that $W \in \mathbf{W^{PROC}}$ is a process matrix if it fulfills the following conditions
	\begin{equation*}
	W   \ge 0, 
	\end{equation*}
	\begin{equation*}
	\prescript{}{\A_I\A_O}{W} = \prescript{}{\A_O\A_I\B_O}{W}, 
	\end{equation*}
	\begin{equation}
	\prescript{}{\B_I\B_O}{W} = \prescript{}{\A_O\B_I\B_O}{W},
	\end{equation}
	\begin{equation*}
	W = \prescript{}{\B_O}{W} +\prescript{}{\A_O}{W} - \prescript{}{\A_O\B_O}{W},	\end{equation*}
	\begin{equation*}	\tr (W) = \dim(\A_O) \cdot \dim(\B_O).
	\end{equation*}
\end{definition}

The   concept of  process matrix can be best  illustrated by considering two characters, Alice   and Bob, performing experiments in two separate laboratories. 
Each party  acts in a local laboratory, which can be identified by an input  space $\A_I$ and an output space $\A_O$ for Alice, and analogously $\B_I $ and $\B_O$ for Bob.
In general, a label $i$, denoting  Alice's measurement outcome, is associated with the CP map   $\Phi_{M^{A}_i} $ obtained from the instrument $\left\{ \Phi_{M^{A}_i}  \right\}_i$. 
Analogously, the Bob's measurement outcome $j$ is associated with the map $\Phi_{M^{B}_j}$ from the instrument  $\left\{ \Phi_{M^{B}_j}  \right\}_j$. Finally, the joint
probability for a pair of outcomes $i$ and $j$ 
can be expressed as
\begin{equation}\label{eq-prob}
p_{ij} = \tr \left[ W \left(M_i^A \otimes  M_j^B \right)  \right],
\end{equation}
where $W \in \mathbf{W^{PROC}}$ is a process matrix  that describes the causal
structure outside of the laboratories. 
The  valid process
matrix is defined by the requirement that probabilities are
well defined, that is, they must be non-negative and sum
up to one. These requirements give us the conditions present in Definition~\ref{process-matrix-1}
and Definition~\ref{process-matrix-2}.

In the general case, the Alice's and Bob's strategies can be more complex than 
the product strategy $M_i^A \otimes M_j^B$ which defines the probability 
$p_{ij}$ given by Eq.~\eqref{eq-prob}. If their action is somehow correlated, 
we can write the associated instrument in the following form $\left\{ \Phi_{ 
N_{ij}^{AB}}\right\}$. 
 It was observed in~\cite{araujo2015witnessing} that this instrument describes a valid strategy, that is
 \begin{equation}\label{eq:nonsignalling}
 \tr \left( W \sum_{ij} N_{ij}^{AB} \right) = 1 
 \end{equation}  for all process matrix $W \in  \mathbf{W^{PROC}} $ if and only if 
 \begin{equation}
\sum_{ij} N_{ij}^{AB}  \in \mathbf{NS} (\A_I \otimes \A_O \otimes \B_I \otimes \B_O).
 \end{equation}

In this paper, we will consider different classes of process matrices. 
Initially, we define the subset of process matrices  known as free  objects in the resource theory of causal connection~\cite{milz2021resource}. Such process matrices will be  defined as follows. 

\begin{definition}
	We say that $W^{A || B} \in \mathbf{W^{PROC}} $ is  a  free process matrix if it satisfies the following condition 
	\begin{equation}\label{def-free}
	W^{A||B} = \rho_{\A_I \B_I } \otimes \Id_{\A_O \B_O}, 
	\end{equation}
	where $\rho_{\A_I \B_I } \in \Omega(\A_I \otimes \B_I)$ is an arbitrary quantum state and $ \Id_{\A_O \B_O}  \in \mathrm{L}(\A_O \otimes \B_O)$.  The set of all process matrices of this form will be denoted by $\mathbf{W^{A || B}}$.
\end{definition}

We often consider   process
matrices  corresponding to quantum combs\cite{chiribella2009theoretical}. For example, a quantum comb $A \prec B$  (see in Fig.~\ref{fig:comb}) shows that Alice's and Bob's operations are performed in causal order. This means that Bob cannot signal to Alice and the choice of
Bob’s instrument cannot influence the statistics
Alice records. Such process matrices are formally defined in the following way.

\begin{definition}
		We say that $W^{A \prec B} \in \mathbf{W^{PROC}} $ is  a   process matrix 	representing a quantum comb $A \prec B$ if it   satisfies the following conditions
	\begin{equation}\label{def-comb}\begin{split}
	W^{A \prec B} = W'_{\A_I\A_O\B_I} \otimes \Id_{\B_O}, \\
	\tr_{\B_I}  W'_{\A_I\A_O\B_I} = W''_{\A_I} \otimes \Id_{\A_O}.
	\end{split}
	\end{equation}
	The set of all process matrices  of this form will be denoted by $\mathbf{W^{A \prec B}}$.
\end{definition}

\begin{figure}[h!]
	\includegraphics[scale=1]{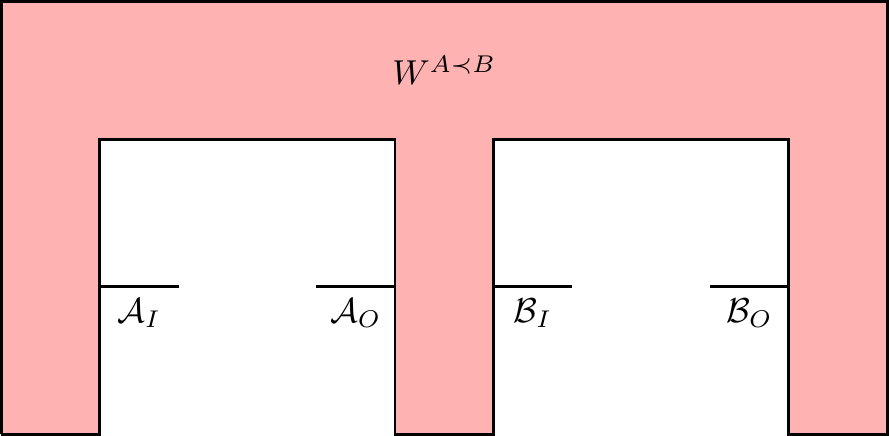}
	\caption{A schematic representation of a process matrix $W^{A \prec B}$ representing a quantum comb $A \prec B $. }
	\label{fig:comb}
\end{figure} 
One can easily observe that the set  $\mathbf{W^{A || B}}$ is an intersection of the sets $\mathbf{W^{A \prec B}}$ and $\mathbf{W^{B \prec A}}$. 
Finally,  the definition of the  set $\mathbf{W^{A \prec B}}$, together with $\mathbf{W^{B \prec A}}$ allow us to provide  their convex hull which is called as  causally separable process matrices. 
\begin{definition} 
 	We say that  $W^{SEP} \in \mathbf{W^{PROC}} $  is  a causally separable process matrix if it  is of the form
		\begin{equation}\label{def-sep}
	W^{\text{SEP}} = p W^{A \prec B } + (1-p) W^{B\prec A}, 
	\end{equation} where $W^{A \prec B} \in \mathbf{W^{A \prec B}}$, $ W^{B\prec A} \in \mathbf{W^{B \prec A}} $ for  some parameter $p \in [0,1]$.  The set of all  causally separable process matrices
	will be denoted by $\mathbf{W^{SEP}}$. 
\end{definition}

There are however process matrices that do not correspond
to a causally separable process and such process matrices are known as  causally
non-separable (CNS). The examples of such matrices were provided in~\cite{oreshkov2012quantum, oreshkov2016causal}.  
The set of all  causally non-separable process matrices
will be denoted by $\mathbf{W^{CNS}}$. 
In Fig.~\ref{fig:sets} we present a schematic plot of the sets of process matrices.

\begin{figure}[h!]
	\includegraphics[scale=3.5]{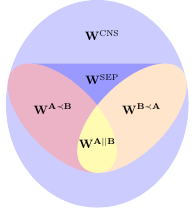}
	\caption{A schematic representation of the sets of process matrices $ \mathbf{W^{PROC}}$. }
	\label{fig:sets}
\end{figure}  

\section{Discrimination task}\label{sec-discrimination-prob}

This section presents the concept of discrimination between pairs of  process matrices.
It is worth emphasizing that the definition of a process matrix is a generalization of the concept of quantum states, channels, superchannels~\cite{gour2019comparison} and even generalized supermaps~\cite{jenvcova2012generalized, jenvcova2012extremality}.  The  task of discrimination between process matrices  poses a natural extension of 
discrimination of quantum states~\cite{helstrom1976quantum}, channels~\cite{watrous2018theory} or measurements~\cite{puchala2018strategies}. The process matrices discrimination task can be 
described  by the following scenario. 

 Let us consider two process matrices $W_0, W_1 \in  \mathbf{W^{PROC}} $. The classical description of process matrices $W_0, W_1 $ is  assumed to be known to the participating parties. We know that one of the process matrices, $W_0$ or $W_1$,  describes the actual correlation between Alice's and Bob's laboratories, but we do not know which one.
Our aim is to determine, with the highest possible probability,  which process matrix describes this correlation.  For this purpose, we construct a discrimination strategy $S$. 
In the general approach, such a  strategy $S$ is described by an instrument $S = \{ S_0, S_1 \}$. Due to the requirement given by Eq.~\eqref{eq:nonsignalling},
 the instrument $S$ must fulfill the condition  $S_0 + S_1 \in \mathbf{NS}(\A_I  \otimes \A_O  \otimes \B_I \otimes \B_O)$.
 The result of composing a process matrix $W$ with the discrimination strategy $S$ results in a classical label which can take values zero or one. 
  If the label zero occurs, we decide to choose that the correlation is given by $W_0$. Otherwise, we decide to choose $W_1$.  
 In this setting the  maximum success  probability  $p_{\text{succ}}  (W_0, W_1)  $  of correct discrimination  between two process matrices $W_0$ and $W_1$  can be expressed by  
\begin{equation}\label{eq:prob}
p_{\text{succ}} (W_0, W_1) = \frac{1}{2} \max_{S= \{S_0, S_1 \}} \left[ \tr(W_0 S_0) + \tr(W_1 S_1) \right]. 
\end{equation}
The following theorem  provides the optimal
probability of process matrices discrimination 
as a direct analogue of the Holevo--Helstrom theorem for quantum states and channels.

\begin{theorem}\label{th:holevo-helstrom}
	Let $W_0, W_1 \in \mathbf{W^{PROC}}$ be two process 
	matrices. For every choice of discrimination strategy $S = \{ S_0, S_1\}$, it 
	holds that
	\begin{equation}\label{eq-holevo}
	\begin{split}
&	\frac{1}{2} 	\tr(S_0W_0) + \frac{1}{2} \tr(S_1W_1)  \le  \\ &
	\frac{1}{2} + \frac{1}{4 } \max \left\{ \| \sqrt{N} (W_0 - W_1) \sqrt{N} 
	\|_1: N \in \mathbf{NS}(\A_I \otimes \A_O \otimes \B_I \otimes \B_O)  \right\} ,
	\end{split}
	\end{equation}
	where $\mathbf{NS}(\A_I \otimes \A_O \otimes \B_I \otimes \B_O)$ is the set of Choi matrices of non-signaling channels. Moreover, there exists a discrimination strategy $S$, which saturates the inequality Eq.~\eqref{eq-holevo}. 
\end{theorem}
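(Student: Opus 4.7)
The plan is to mimic the classical Holevo--Helstrom argument, with the POVM condition $P_0+P_1=\Id$ replaced by the weaker requirement $S_0+S_1\in\mathbf{NS}(\A_I\otimes\A_O\otimes\B_I\otimes\B_O)$, and then maximize over the remaining freedom in $N\coloneqq S_0+S_1$.

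First, I would parametrize a generic discrimination strategy $S=\{S_0,S_1\}$ by setting $N = S_0+S_1$ and $E = S_0-S_1$, so that $S_0=(N+E)/2$ and $S_1=(N-E)/2$. The constraints $S_0,S_1\ge 0$ together with $N\in\mathbf{NS}$ are then equivalent to $E$ being Hermitian with $-N\le E\le N$ and $N\in\mathbf{NS}$. A short computation rewrites the success functional as
\begin{equation}
\frac{1}{2}\tr(S_0 W_0)+\frac{1}{2}\tr(S_1 W_1) = \frac{1}{4}\tr\bigl(N(W_0+W_1)\bigr) + \frac{1}{4}\tr\bigl(E(W_0-W_1)\bigr).
\end{equation}
The first term is handled by invoking Eq.~\eqref{eq:nonsignalling}: since $N\in\mathbf{NS}$ and each $W_i$ is a process matrix, $\tr(NW_0)=\tr(NW_1)=1$, and hence the first term is exactly $\frac{1}{2}$. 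This is the key structural input from the theory of process matrices, replacing the identity $\tr(\Id\rho)=1$ used in the ordinary Holevo--Helstrom bound.

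Next, I would bound the remaining term $\frac{1}{4}\tr(E(W_0-W_1))$ for fixed $N$. On the support of $N$, perform the change of variables $E = \sqrt{N}\,F\,\sqrt{N}$ (using the Moore--Penrose pseudo-inverse introduced in the preliminaries if $N$ is rank-deficient, and noting that $-N\le E\le N$ forces $E$ to vanish off the support of $N$). The constraint becomes $-\Id\le F\le\Id$, and cyclicity of the trace gives
\begin{equation}
\tr\bigl(E(W_0-W_1)\bigr) = \tr\bigl(F\,\sqrt{N}(W_0-W_1)\sqrt{N}\bigr) \le \bigl\|\sqrt{N}(W_0-W_1)\sqrt{N}\bigr\|_1,
\end{equation}
by the standard dual characterization of the trace norm. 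Taking the maximum over $N\in\mathbf{NS}$ yields the claimed inequality~\eqref{eq-holevo}.

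For saturation, I would reverse the construction: let $N^\star$ achieve the maximum on the right-hand side, and let $\sqrt{N^\star}(W_0-W_1)\sqrt{N^\star} = \Pi_+ - \Pi_-$ be its Jordan decomposition. Choose $F^\star = \Pi_+^0 - \Pi_-^0$ where $\Pi_\pm^0$ are the spectral projectors on the positive/negative eigenspaces (restricted to the support of $N^\star$), set $E^\star = \sqrt{N^\star}\,F^\star\,\sqrt{N^\star}$, and define $S_0^\star = (N^\star+E^\star)/2$, $S_1^\star=(N^\star-E^\star)/2$. By construction $S_0^\star,S_1^\star\ge 0$ and $S_0^\star+S_1^\star = N^\star\in\mathbf{NS}$, so $\{S_0^\star,S_1^\star\}$ is a legitimate discrimination strategy, and plugging back in gives equality.

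The main obstacle I anticipate is the pseudo-inverse bookkeeping in the change of variables $E=\sqrt{N}F\sqrt{N}$ when $N$ is not of full rank; one must check that the constraint $-N\le E\le N$ really forces $E$ to live on the support of $N$, so that no extremizing $E$ is lost. Beyond that, the proof is essentially the Holevo--Helstrom template with the identity replaced by $N$ and the normalization $\tr\rho=1$ replaced by the non-signalling normalization condition~\eqref{eq:nonsignalling}.
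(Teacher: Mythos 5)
Your proposal is correct and follows essentially the same route as the paper: both proofs hinge on the normalization $\tr(NW_i)=1$ from the non-signalling condition and on the change of variables $S_i=\sqrt{N}\,Q_i\sqrt{N}$ (your $E=\sqrt{N}F\sqrt{N}$ with $F=Q_0-Q_1$ is the same reparametrization), reducing the problem to Holevo--Helstrom for the unit-trace operators $\sqrt{N}W_i\sqrt{N}$, with the same pseudo-inverse bookkeeping on the support of $N$. Your version merely unrolls the trace-norm duality step explicitly instead of citing the state-discrimination theorem as a black box.
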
 	
	\begin{proof}

Let us define the sets \begin{equation}
\mathbf{A} \coloneqq \left\{ (S_0, S_1):S_0 + S_1 \in  \mathbf{NS}(\A_I \otimes \A_O \otimes \B_I \otimes \B_O) ,\,\, S_0, \, S_1 \ge 0 \right\}.
\end{equation}
		 and  
		 \begin{equation}
		 \begin{split}
		 \mathbf{B} \coloneqq \{ (\sqrt{N} Q_0 \sqrt{N}, \sqrt{N} Q_1 \sqrt{N}):  & \,  N \in  \mathbf{NS}(\A_I \otimes \A_O \otimes \B_I \otimes \B_O), \\ & 
		  Q_0,\, Q_1 \ge 0, \\ & Q_0 + Q_1 = \Id_{\A_I \A_O \B_I\B_O} \}.
		 \end{split}
		 \end{equation} We prove the equality between sets $\mathbf{A}$ and $\mathbf{B}$. To show $\mathbf{B} 
 \subseteq \mathbf{A}$, it is suffices to observe that $ \sqrt{N} Q_0 \sqrt{N} +  \sqrt{N} Q_1 \sqrt{N} \in  \mathbf{NS}(\A_I \otimes \A_O \otimes \B_I \otimes \B_O) $.		
To prove $\mathbf{A} \subseteq \mathbf{B}$ let us take $N \coloneqq S_0 + S_1$. It implies that
\begin{equation}
\Pi_{\text{im} (N)} = \sqrt{N}^{-1} N \sqrt{N}^{-1} = \sqrt{N}^{-1} S_0 
\sqrt{N}^{-1} + \sqrt{N}^{-1} S_1 \sqrt{N}^{-1}. 
\end{equation}
Let us fix $\widetilde{Q_0} \coloneqq \sqrt{N}^{-1} S_0 \sqrt{N}^{-1}$ and 
$\widetilde{Q_1} \coloneqq \sqrt{N}^{-1} S_1 \sqrt{N}^{-1}$. Then, we have $\Id_{\A_I \A_O \B_I\B_O} =  \Id_{\A_I \A_O \B_I\B_O} 
- \Pi_{\text{im} (N)} + \Pi_{\text{im} (N)} = \Id_{\A_I \A_O \B_I\B_O} - \Pi_{\text{im} (N)} + \widetilde{Q_0} + \widetilde{Q_1}$. Finally, it is suffices 
to take \begin{equation}\label{eq:Q0}
Q_0 \coloneqq  \Id_{\A_I \A_O \B_I\B_O} - \Pi_{\text{im} (N)} + \widetilde{Q_0}  
\end{equation}
 and $ Q_1 \coloneqq 
\widetilde{Q_1}$. It implies that $\mathbf{A} = \mathbf{B}$. 
In conclusion, we obtain
\begin{equation}
\begin{split}
&\frac{1}{2} \tr(S_0W_0) +  \frac{1}{2} \tr(S_1W_1)   \\& = \frac{1}{2} \tr \left( \sqrt{N} Q_0 \sqrt{N} W_0 \right) + \frac{1}{2} \tr 
\left( \sqrt{N} Q_1 \sqrt{N} W_1 \right)   \\& =   \frac{1}{2} \tr \left(  Q_0 
\sqrt{N} W_0 \sqrt{N} \right) + \frac{1}{2} \tr \left( Q_1 \sqrt{N} W_1 
\sqrt{N}  \right) \\&  \le \frac{1}{2} + \frac{1}{4 } \max \left\{ || \sqrt{N} (W_0 - W_1) \sqrt{N} 
||_1: N \in \mathbf{NS}(\A_I \otimes \A_O \otimes \B_I \otimes \B_O)  \right\}.
\end{split}
\end{equation}Moreover, from Holevo-Helstrom theorem~\cite{watrous2018theory} there exists a projective binary measurement $Q= \{ Q_0, Q_1\}  $ such that the last inequality is saturated,  which completes the proof.
 \end{proof}

\begin{corollary}
	The maximum   probability  $p_{\text{succ}}  (W_0, W_1)  $  of correct discrimination between two process matrices $W_0$ and $W_1$ is given by 
	\begin{equation}\label{eq:prob-exact}
	\begin{split}
		&p_{\text{succ}} (W_0, W_1) = \\& \frac{1}{2} + \frac{1}{4 } \max \left\{ \| \sqrt{N} (W_0 - W_1) \sqrt{N} 
	\|_1: N \in \mathbf{NS}(\A_I \otimes \A_O \otimes \B_I \otimes \B_O)  \right\}.
	\end{split} 
	\end{equation}

\end{corollary}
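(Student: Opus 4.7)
The plan is to derive the corollary as an immediate consequence of Theorem~\ref{th:holevo-helstrom}. Recall from Eq.~\eqref{eq:prob} that
\[
p_{\text{succ}}(W_0, W_1) = \frac{1}{2} \max_{S = \{S_0, S_1\}} \left[ \tr(W_0 S_0) + \tr(W_1 S_1) \right],
\]
with the maximum ranging over all admissible discrimination strategies, i.e.\ pairs $(S_0, S_1)$ with $S_0, S_1 \ge 0$ and $S_0 + S_1 \in \mathbf{NS}(\A_I \otimes \A_O \otimes \B_I \otimes \B_O)$. The corollary is simply this optimum evaluated in closed form.

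First I would apply the upper bound of Theorem~\ref{th:holevo-helstrom}. Since inequality Eq.~\eqref{eq-holevo} holds for every admissible $S$ and its right-hand side does not depend on $S$, taking the maximum over $S$ on the left-hand side produces
\[
p_{\text{succ}}(W_0, W_1) \le \frac{1}{2} + \frac{1}{4} \max \left\{ \|\sqrt{N}(W_0 - W_1)\sqrt{N}\|_1 : N \in \mathbf{NS}(\A_I \otimes \A_O \otimes \B_I \otimes \B_O) \right\}.
\]
Next I would invoke the saturation clause of the same theorem, which guarantees a strategy $S^\star$ attaining equality in Eq.~\eqref{eq-holevo}. This $S^\star$ certifies the matching lower bound, and combining the two inequalities yields the identity stated in Eq.~\eqref{eq:prob-exact}.

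No essentially new work is required: the corollary just rereads Theorem~\ref{th:holevo-helstrom} as an exact formula for the optimum rather than as a two-sided bound. All substantive content, in particular the set equality $\mathbf{A} = \mathbf{B}$ that reparametrizes an arbitrary non-signaling discrimination instrument $\{S_0, S_1\}$ as a Holevo--Helstrom measurement $\{Q_0, Q_1\}$ on the rescaled operators $\sqrt{N}\,W_0 \sqrt{N}$ and $\sqrt{N}\,W_1 \sqrt{N}$, has already been carried out in the proof of the theorem. I therefore do not anticipate any additional obstacle beyond correctly formalizing the passage from a strategy-wise inequality with a saturating witness to an equality for the optimum.
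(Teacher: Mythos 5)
Your proposal is correct and is exactly how the paper treats this corollary: it is stated as an immediate consequence of Theorem~\ref{th:holevo-helstrom}, with the upper bound coming from the strategy-wise inequality Eq.~\eqref{eq-holevo} and the lower bound from the theorem's saturation clause. No further comment is needed.
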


\begin{figure}[h!]
	\includegraphics[scale=0.6]{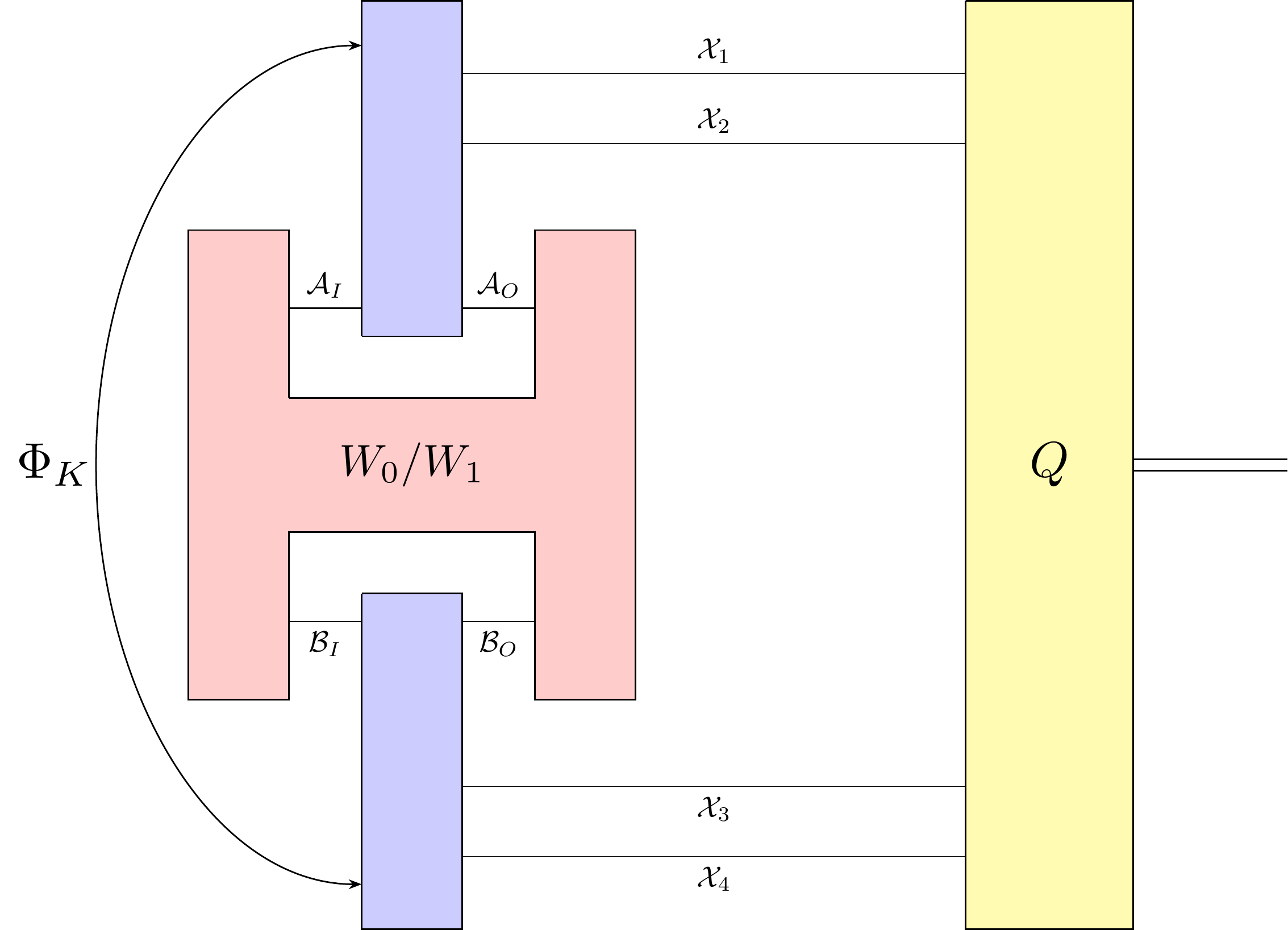}
	\caption{A schematic representation of the setup for
		distinguishing between process matrices $W_0$ and $W_1$. The discrimination strategy is constructed by using the quantum channel  $\Phi_K: \mathrm{L}(\A_I \otimes \B_I) \rightarrow \mathrm{L}(\A_O \otimes \B_O \otimes \XX_1 \otimes \XX_2 \otimes \XX_3 \otimes \XX_4 )$ and the  binary measurement $Q = \{ Q_0, Q_1\} $ defined in the proof of Theorem~\ref{th:holevo-helstrom}.   }
	\label{fig:realization}
\end{figure}

	As a valuable by-product of Theorem \ref{th:holevo-helstrom}, we receive a realization of  process matrices discrimination scheme. The schematic representation of this setup is presented in Fig.~\ref{fig:realization}. 
To distinguish 
the process matrices $W_0$ and  $W_1$, Alice and Bob prepare the strategy $S = \{ S_0, S_1 \}$ such that $S_0 + S_1 \in  \mathbf{NS}(\A_I \otimes \A_O \otimes \B_I \otimes \B_O)  $. 	To implement  it,  let us introduce complex Euclidean spaces  $\XX_1, \ldots, \XX_4$  such that $\dim\left(\bigotimes_{i=1}^4 \XX_i \right) = \dim(\A_I \otimes \A_O \otimes \B_I \otimes \B_O)$. 
	Alice and Bob prepare the  quantum channel $\Phi_K: \mathrm{L}(\A_I \otimes \B_I) \rightarrow \mathrm{L}(\A_O \otimes \B_O \otimes \XX_1 \otimes \XX_2 \otimes \XX_3 \otimes \XX_4 )$    with the  Choi matrix $K$ given by  
	\begin{equation}
	K = \left( \Id_{\XX_{1,2,3,4}} \otimes \sqrt{N} 
	\right)\left( |\Id \rangle\rangle \langle \langle \Id |  \right) 
	\left( \Id_{\XX_{1,2,3,4}} \otimes \sqrt{ N }\right), 
	\end{equation}
 where $| \Id \rangle \rangle \in \mathrm{L}(\XX_1 \otimes \XX_2 \otimes \XX_3 \otimes \XX_4  \otimes \A_I \otimes \A_O \otimes \B_I \otimes \B_O )$	and $N\in  \mathbf{NS}(\A_I \otimes \A_O \otimes \B_I \otimes \B_O)  $ maximizes the trace norm $\|\sqrt{N } (W_0 -W_1) \sqrt{N} \|_1$.
	It is worth noting that the quantum channel $\Phi_K$ is correctly defined due to the fact that $\tr_{\XX_{1,2,3,4}} K \in  \mathbf{NS}(\A_I \otimes \A_O \otimes \B_I \otimes \B_O)   $. Afterwards, 
	they  perform the binary measurement
	  $Q = \{ Q_0, Q_1\}$, where the effect  $Q_0 \in \mathrm{L}(\XX_1 \otimes \XX_2 \otimes \XX_3 \otimes \XX_4)$   is defined by Eq.~\eqref{eq:Q0}. Next, they decide which process matrix was used during the calculation  assuming $W_0$ if the measurement label is $0$. Otherwise, they assume $W_1$.

\section{Discrimination between different classes of process matrices }\label{sec-discrimination-example}
This section presents some examples of discrimination between different classes of process matrices. We begin our consideration with the problem of discrimination between two free process matrices $ \mathbf{W^{A || B}}$.
Next, we will consider   various cases of process matrices discrimination representing a quantum comb. 
First, we calculate exact probability of correct discrimination between two process matrices come from the same class $ \mathbf{W^{A \prec B}}$. Next, we study the discrimination task assuming that one of the process matrices is of the form  $ \mathbf{W^{A \prec B}}$ and the other one is of the form $ \mathbf{W^{B \prec A}}$. 
Finally, we 
construct a particular class  of process matrices which can be perfectly distinguished.

\subsection{Free process matrices} \label{sec-free}
The following consideration confirms an intuition that the task of discrimination between free process matrices reduces to the problem of discrimination between quantum states.

	From definition of $p_{\text{succ}}(W_0, W_1)$ we have 
	\begin{equation}
	\begin{split}
	p_{\text{succ}} \left(W_0, W_1\right) = \frac{1}{2} \max_{S=\{S_0, S_1\}} \left[ \tr\left(W_0 S_0\right) + \tr\left(W_1 S_1\right) \right], 
	\end{split} 
	\end{equation}
	Let $W_0$ and $W_1$ be two process matrices of the form 
	$W_0 =  \rho \otimes \Id_{\A_O \B_O}$ and $W_1 =  \sigma \, \otimes \, \Id_{\A_O \B_O}$, where $	\rho, \, \sigma \in \Omega(\A_I \otimes \B_I)$. 
		Then, $p_{\text{succ}}$  is exactly equal to
	\begin{equation}
	\begin{split}
	& \max_{S= \{ S_0, S_1\}} \left[\frac{1}{2}  \tr\left(W_0 S_0\right) +\frac{1}{2}  \tr\left(W_1 S_1\right) \right] = \\& \max_{S= \{ S_0, S_1\}} \left[ \frac{1}{2} \tr \left((\rho \otimes \Id) S_0 \right) + \frac{1}{2} \tr \left((\sigma \otimes \Id) S_1 \right) \right] = \\&  \max_{S= \{ S_0, S_1\}}  \left[ \frac{1}{2} \tr \left(\rho \tr_{\A_O\B_O} S_0 \right) + \frac{1}{2} \tr \left(\sigma \tr_{\A_O\B_O} S_1 \right) \right]. 
	\end{split} 
	\end{equation}
	Let us observe $\tr_{\A_O\B_O}S_0  +  \tr_{\A_O\B_O}S_1  = \Id_{\A_I \B_I}$. So, $ \{ \tr_{\A_O\B_O}S_0, \tr_{\A_O\B_O}S_1  \}$ is a binary measurement and therefore, from Holevo-Helstrom theorem for quantum states, we have 
	\begin{equation}
	\max_{S= \{ S_0, S_1\}}  \left[ \frac{1}{2} \tr \left(\rho \tr_{\A_O\B_O} S_0 \right) + \frac{1}{2} \tr \left(\sigma \tr_{\A_O\B_O} S_1 \right) \right] \le \frac{1}{2} + \frac{1}{4} \| \rho - \sigma \|_1. 
	\end{equation}
	Now, assume that $E = \{ E_0, E_1  \}$  is  the Holevo-Helstrom measurement (by taking $E_0$ and $E_1$ as positive and negative part of $\rho - \sigma$, respectively). Hence, we obtain \begin{equation}
	\max_{E= \{ E_0, E_1\}} \left[  \frac{1}{2} \tr \left(\rho E_0  \right) + \frac{1}{2} \tr \left(\sigma E_1  \right) \right]  =   \frac{1}{2} + \frac{1}{4} \| \rho - \sigma \|_1. 
	\end{equation} Observe, it is suffices to take $ S_0 \coloneqq E_0 \otimes \proj{0} \otimes \proj{0}  $ and   $ 
	S_1 \coloneqq E_1 \otimes \proj{0} \otimes \proj{0}  $.  Note that $ S_0 + S_1 = \Id_{\A_I \B_I}\otimes \proj{0} \otimes \proj{0} $ is non-signaling channel. Therefore, we have 
	\begin{equation}
	p_{\text{succ}} \left(W_0, W_1\right) = \frac{1}{2} + \frac{1}{4} \| \rho - \sigma \|_1.
	\end{equation}
	which completes the consideration.

Due to the above consideration, we obtain the following corollary. 
\begin{corollary}
Let $	\rho, \sigma \in \Omega(\A_I \otimes \B_I)$ be quantum states 
	and let  $W_0, W_1 \in \mathbf{W^{A || B}}$  be  two free process matrices of the form  $W_0 = \rho \otimes \Id$ and $  W_1=\sigma\otimes \Id$. Then, 
	\begin{equation}
	p_{\text{succ}} \left(W_0, W_1 \right)  = \frac{1}{2} + \frac{1}{4} \| \rho- \sigma\|_1.	\end{equation}
\end{corollary}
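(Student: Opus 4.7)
The plan is to prove the corollary by establishing a matching upper and lower bound on $p_{\text{succ}}(W_0, W_1)$. I would start from the defining formula
\begin{equation*}
p_{\text{succ}}(W_0, W_1) = \frac{1}{2}\max_{S=\{S_0,S_1\}}\bigl[\tr(W_0 S_0)+\tr(W_1 S_1)\bigr],
\end{equation*}
where the maximum runs over binary instruments with $S_0+S_1 \in \mathbf{NS}(\A_I\otimes\A_O\otimes\B_I\otimes\B_O)$. Substituting the product form $W_b = \tau_b \otimes \Id_{\A_O\B_O}$ (with $\tau_0=\rho$, $\tau_1=\sigma$) and using $\tr\bigl((\tau_b\otimes\Id)S_b\bigr)=\tr\bigl(\tau_b\,\tr_{\A_O\B_O}S_b\bigr)$ reduces the objective to a linear functional of the reduced operators $F_b := \tr_{\A_O\B_O}S_b$ acting on $\A_I\otimes\B_I$.

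The first key step is to recognize that $\{F_0,F_1\}$ is a genuine POVM on $\A_I\otimes\B_I$. Positivity is inherited from the positivity of $S_0,S_1$, and the completeness relation $F_0+F_1=\Id_{\A_I\B_I}$ follows because the trace-preservation part of the non-signaling condition on $S_0+S_1$ forces $\tr_{\A_O\B_O}(S_0+S_1)=\Id_{\A_I\B_I}$. Applying the Holevo--Helstrom theorem for quantum states to $\{F_0,F_1\}$ immediately yields the upper bound
\begin{equation*}
p_{\text{succ}}(W_0,W_1) \le \frac{1}{2}+\frac{1}{4}\|\rho-\sigma\|_1.
\end{equation*}

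For the matching lower bound I would exhibit an explicit strategy achieving equality. Let $\{E_0,E_1\}$ be the Holevo--Helstrom measurement for $\rho,\sigma$ (the projectors onto the positive and negative parts of $\rho-\sigma$), and define $S_b := E_b \otimes \proj{0}_{\A_O}\otimes \proj{0}_{\B_O}$. Then $S_0+S_1 = \Id_{\A_I\B_I}\otimes \proj{0}\otimes\proj{0}$ is the Choi matrix of a product of local channels (each party discards the input and outputs $\ket{0}$), so it is trivially non-signaling and satisfies the required constraints of Definition ``non-signaling''. Evaluating the objective on this $S$ recovers $\frac{1}{2}+\frac{1}{4}\|\rho-\sigma\|_1$, closing the argument.

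The only mildly delicate step is verifying that the partial trace $\tr_{\A_O\B_O}(S_0+S_1)$ must equal $\Id_{\A_I\B_I}$ and that $S_0+S_1$ being non-signaling ensures this; but this is immediate from trace preservation of any non-signaling channel. No further obstacle is expected, since the problem decomposes cleanly: because the output spaces $\A_O,\B_O$ appear only through $\Id_{\A_O\B_O}$ in both $W_0$ and $W_1$, the discrimination is insensitive to the output-side structure of the strategy and collapses to ordinary state discrimination on $\A_I\otimes\B_I$.
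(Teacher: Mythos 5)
Your proposal is correct and follows essentially the same route as the paper: reduce the objective to the reduced operators $\tr_{\A_O\B_O}S_b$, note that the non-signaling (trace-preservation) constraint makes these a binary POVM on $\A_I\otimes\B_I$, apply Holevo--Helstrom for states for the upper bound, and achieve it with $S_b = E_b\otimes\proj{0}\otimes\proj{0}$. Your explicit justification of the completeness relation $F_0+F_1=\Id_{\A_I\B_I}$ is a welcome small addition the paper leaves implicit.
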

 
\subsection{Process matrices representing  quantum combs} \label{sec-comb}
Here, we will compare 
the probability of correct discrimination between two   process matrices being quantum combs of the form  $W_0^{A \prec B},W_1^{A \prec B }\in \mathbf{W^{A \prec B}}$ by using non-signalling strategy $S = \{ S_0, S_1 \}$ described by Eq.~\eqref{eq-p-succ}  or an adaptive strategy. 

 Before that, we will discuss the issue of adaptive strategy.
The most general strategy of quantum operations discrimination is known as an adaptive strategy~\cite{jenvcova2014base, chiribella2009theoretical}. An adaptive strategy is realized by a quantum tester~\cite{bisio2011quantum}. A schematic representation of this setup is presented in Fig.~\ref{fig:adaptive}.

Let us consider a quantum tester $\{ L_0, L_1 \}$. 
 The probability of correct discrimination  between $W_0^{A \prec B}$ and $W_1^{A \prec B}$ by using  an adaptive strategy is defined by equation
\begin{equation}\label{eq:adaptive-strategy}
p_{adapt }\left( W_0^{A \prec B}, W_1^{A \prec B} \right) \coloneqq \frac{1}{2} \max_{ \{ L_0, L_1\}}   L_0 * W_0 + L_1 * W_1. 
\end{equation}

\begin{figure}[h!]
	\includegraphics[scale=1]{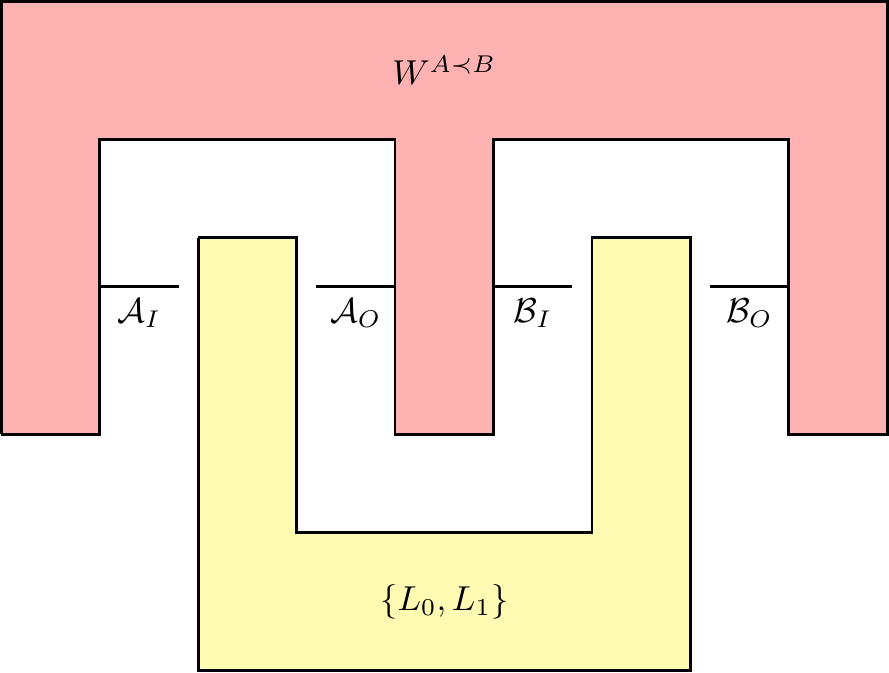}
	\caption{A schematic representation of an adaptive strategy discriminating two process matrices $W_0, W_1 \in W^{A \prec B}$ by using a quantum tester $\{ L_0, L_1\}$. }
	\label{fig:adaptive}
\end{figure}   

It turns out that we do not need adaptation in order to obtain the optimal probability os distinction. This is stated formally in the following theorem. 

\begin{theorem} Let  $W_0^{A \prec B}, W_1^{A \prec B} \in \mathbf{W^{A \prec B}}$ be two process matrices representing quantum combs $A \prec B$. Then, 
	\begin{equation}
	p_{succ} \left( W_0^{A \prec B}, W_1^{A \prec B} \right) = p_{adapt }\left( W_0^{A \prec B}, W_1^{A \prec B} \right). 
	\end{equation}
\end{theorem}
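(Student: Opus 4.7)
My plan is to establish both inequalities $p_{\text{adapt}} \geq p_{\text{succ}}$ and $p_{\text{adapt}} \leq p_{\text{succ}}$, exploiting the factorization $W_k^{A\prec B} = W'_k \otimes \Id_{\B_O}$. This gives $\tr(W_k X) = \tr(W'_k\,\tr_{\B_O} X)$ for any operator $X$ on $\A_I \otimes \A_O \otimes \B_I \otimes \B_O$, so both probabilities depend on the strategy only through its $\B_O$-marginal, and it suffices to compare the feasible sets of these marginals.

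The direction $p_{\text{adapt}} \geq p_{\text{succ}}$ is immediate: any non-signalling instrument is in particular a valid (non-adaptive) tester strategy, as the affine decomposition $\Phi_N = \sum_i \lambda_i \Phi_{S_i}\otimes\Phi_{T_i}$ of non-signalling channels from the preliminaries writes any NS strategy as a mixture of product-channel strategies that trivially fit the tester framework.

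For the reverse direction $p_{\text{adapt}} \leq p_{\text{succ}}$, I would take any tester $\{L_0, L_1\}$ and construct a matching NS strategy via the explicit lift
\begin{equation*}
S_k := (\tr_{\B_O} L_k) \otimes \Id_{\B_O}/\dim(\B_O).
\end{equation*}
Positivity is preserved, and $\tr(W_k S_k) = \tr(W_k L_k)$ follows immediately from the factorization. It remains to check that $S = S_0 + S_1 \in \mathbf{NS}$. The tester sum $L$, as the Choi matrix of a two-slot comb with causal order $A \prec B$, satisfies $\tr_{\B_O} L = \Id_{\B_I} \otimes R^{(1)}$ for some Choi matrix $R^{(1)}$ of a CPTP map $\A_I \to \A_O$; substituting into the lift, one verifies by direct computation that both non-signalling conditions hold for $S$, with the first matching $\Id_{\B_I}/\dim(\B_I) \otimes P$ for $P := \dim(\B_I) R^{(1)}$, and the second following from $\tr_{\A_O} R^{(1)} = \Id_{\A_I}$.

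The main obstacle is the precise extraction of $\tr_{\B_O} L = \Id_{\B_I} \otimes R^{(1)}$ from the recursive comb definition in the paper; once this structural identity is in place, the rest of the argument is direct. Should a subtlety arise in reconciling the tester's initial-state normalization with the NS-marginal structure, the SDP framework developed in Section~\ref{spd-prob} for $p_{\text{succ}}$ can be adapted to $p_{\text{adapt}}$ by replacing the $\mathbf{NS}$-constraint with the tester-comb constraints, and the two duals matched under restriction to $\mathbf{W^{A\prec B}}$, yielding the equality by strong duality.
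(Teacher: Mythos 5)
Your proposal is correct and follows essentially the same route as the paper: the same lift $S_k = (\tr_{\B_O} L_k)\otimes \Id_{\B_O}/\dim(\B_O)$ of the optimal tester to a non-signalling strategy, combined with the factorization $W_k^{A\prec B}=W_k'\otimes\Id_{\B_O}$ and the identity $\tr_{\B_O}(L_0+L_1)=\Id_{\B_I}\otimes J$, with the easy direction handled by set containment. One small caution: the decomposition $\Phi_N=\sum_i\lambda_i\,\Phi_{S_i}\otimes\Phi_{T_i}$ is only affine (the $\lambda_i$ may be negative), so it does not literally exhibit an NS strategy as a mixture of product tester strategies; the containment of NS strategies in the tester set is better checked directly from the NS marginal conditions, which is how the paper implicitly treats this step.
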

\begin{proof}
		For simplicity, we will omit  superscripts ($A \prec B$ and $B\prec A$). 
	The inequality $	p_{succ} \left( W_0, W_1 \right) \le  p_{adapt }\left( W_0, W_1 \right)  $ is trivial by observing that we calculate maximum value over a larger set.

	To show $	p_{succ} \left( W_0, W_1 \right) \ge  p_{adapt }\left( W_0, W_1 \right)  $,  let us consider the quantum tester $ \{ L_0, L_1\}$ which 
	maximizes Eq.~\eqref{eq:adaptive-strategy}, that means 
	\begin{equation}
	p_{adapt}\left( W_0, W_1\right) = \frac{1}{2} \left( L_0 * W_0 + L_1 * W_1 \right).
 	\end{equation}
 	Hence, from definition of $W^{A \prec B}$ we have 
 		\begin{equation}
 	W^{A \prec B} =  W'\otimes \Id_{\B_O}, 
 	\end{equation}
 	and then we obtain
 	\begin{equation}
 	 \frac{1}{2} \left( L_0 * W_0 + L_1 * W_1 \right) = 
 	 \frac{1}{2}  \tr \left( W'_0 \tr_{\B_O} L_0 + W'_1 \tr_{\B_O} L_1  \right).
 	\end{equation}
 	Observe that $\tr_{\B_O} (L_0 + L_1 ) = \Id_{\B_I} \otimes J$, where $J$ is a Choi matrix of a channel $\Phi_J: \mathrm{L}(\A_I) \rightarrow \mathrm{L}(\A_O )$. 
 	Let us define a strategy $S = \{S_0, S_1 \}$ such that 
 \begin{equation}
 \begin{split}
 & S_0 = \tr_{\B_O} L_0 \otimes \frac{\Id_{\B_O}}{\dim(\B_O)},\\
& S_1= \tr_{\B_O} L_1 \otimes \frac{\Id_{\B_O}}{\dim(\B_O)},\\
 & S = S_0 + S_1.
 \end{split}
 \end{equation}
 It easy to observe that $S\in \mathbf{NS}(\A_O \otimes \A_I \otimes \B_O \otimes \B_I)$. Then, we  have  
 \begin{equation}
 \frac{1}{2} \left( S_0 * W_0 + S_1 * W_1 \right) = 
 \frac{1}{2} \left( \tr \left( W'_0 \tr_{\B_O} L_0 + W'_1 \tr_{\B_O} L_1  \right) \right).
 \end{equation}
 It implies that 
\begin{equation}
p_{succ} \left( W_0, W_1\right) \ge  p_{adapt }\left( W_0,W_1\right), 
\end{equation}
which completes the proof. 
\end{proof}

\subsection{Process matrices of the form $W^{A \prec B}$ and $W^{B \prec A}$} \label{sec-perfect-discrim}

Now, we present some  results for discrimination task assuming the one of the process matrices if of the form  $ \mathbf{W^{A \prec B}}$ and the other one is of the form $ \mathbf{W^{B \prec A}}$.   We will construct a particular class of such process matrices for which the perfect discrimination is possible.

	Let us define a process matrix of the form \begin{equation}\label{eq:process-matrix}
W^{A \prec B} = \rho \otimes |U \rangle \rangle \langle \langle U |  \otimes \Id,
\end{equation} where  $  \rho \in \Omega(\A_I),  |U \rangle \rangle \langle \langle U | $ is the Choi matrix of a unitary channel
$\mathrm{Ad}_{U^\top}:\mathrm{L}(\A_O) \rightarrow \mathrm{L}(\mathcal{B}_I)$ 
of the form $\mathrm{Ad}_{U^\top}(X) = U^\top X \, \bar{U} $ and $\Id \in \mathrm{L}(\B_O)$. A schematic representation of this  process matrix we can see in Fig.~\ref{fig:perfectstrategy}.

\begin{figure}[h!]
	\centering
	\includegraphics[scale=1]{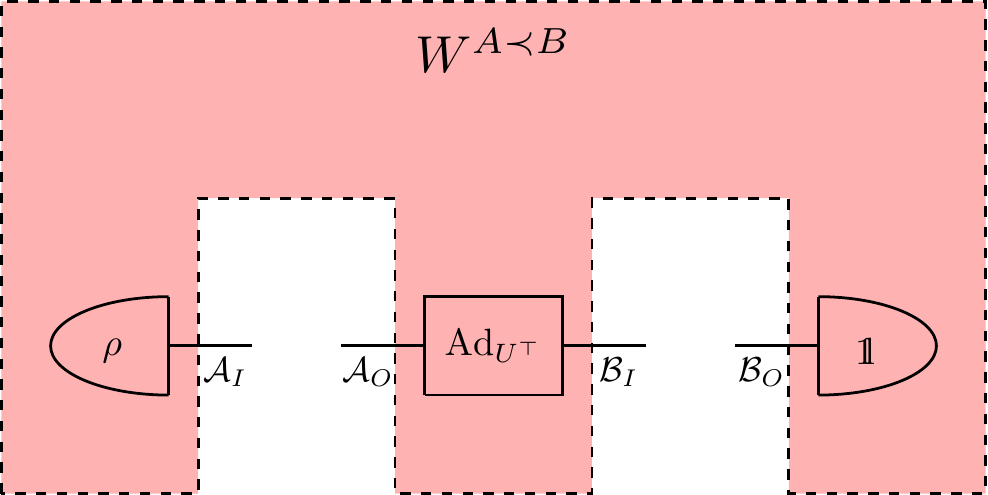}
	\caption{A schematic representation of  process matrix $W^{A \prec B} $ given by Eq.~\eqref{eq:process-matrix}.}
	\label{fig:perfectstrategy}
\end{figure} 
\begin{proposition} 
%
	Let $ W^{A \prec B} $ be a process matrix given by Eq.~\eqref{eq:process-matrix}. Let us define a process matrix $	W^{ B \prec A }$ of the form 
	\begin{equation}
	W^{ B \prec A } = P_{\pi} W^{A \prec B}  P_{\pi},
	\end{equation} where $P_{\pi}$ is the swap operator replacing the systems $\A_I \rightarrow \B_I $ and  $\A_O \rightarrow \B_O$. 
	 Then, the process matrix $W^{A \prec B}$ is perfectly distinguishable from $W^{B \prec A}$. 
\end{proposition}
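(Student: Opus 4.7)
The plan is to exhibit an explicit \emph{product} strategy $S=\{S_{0},S_{1}\}$ — one local instrument for Alice and one for Bob — whose classical outcomes are deterministic in each of the two hypotheses. Since any product of local instruments gives a non-signalling channel, the membership $S_{0}+S_{1}\in\mathbf{NS}$ will be automatic; the only work is to verify $\tr[W^{A\prec B}S_{0}]=\tr[W^{B\prec A}S_{1}]=1$, which by Corollary~1 immediately gives $p_{\mathrm{succ}}=1$.

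First, I would pick two pure states $|\psi\rangle\in\A_{O}$ and $|\xi\rangle\in\B_{O}$ whose images $|U^{\top}\psi\rangle$ and $|U^{\top}\xi\rangle$ both lie in $\mathrm{supp}(\rho)^{\perp}$. For pure $\rho=|\phi\rangle\langle\phi|$, as depicted in Fig.~\ref{fig:perfectstrategy}, this amounts to $|U^{\top}\psi\rangle,|U^{\top}\xi\rangle\perp|\phi\rangle$, which is possible as soon as $\dim(\A_{I})\ge 2$.

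Second, I would specify the instruments. Alice measures $\A_{I}$ with the binary projective POVM $\{|U^{\top}\xi\rangle\langle U^{\top}\xi|,\ \Id-|U^{\top}\xi\rangle\langle U^{\top}\xi|\}$ and outputs the fixed state $|\psi\rangle$ on $\A_{O}$ regardless of the outcome; Bob does the symmetric thing on $(\B_{I},\B_{O})$ with the roles of $\psi$ and $\xi$ interchanged. Then I would set $S_{0}$ to be the Choi element corresponding to the outcome pair (Alice `no click', Bob `click') and $S_{1}$ to the pair (Alice `click', Bob `no click'); the remaining two possible outcome pairs can be absorbed arbitrarily into $S_{0}$ or $S_{1}$, since their probabilities will vanish below.

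Third, I would verify the two traces by direct contraction, using $\mathrm{Ad}_{U^{\top}}(|\psi\rangle\langle\psi|)=|U^{\top}\psi\rangle\langle U^{\top}\psi|$ and the process-matrix formula~\eqref{eq-prob}. In the case $W^{A\prec B}$, the unitary link transports $|\psi\rangle$ on $\A_{O}$ to $|U^{\top}\psi\rangle$ on $\B_{I}$, so Bob's projector clicks with probability one, while Alice's projector acts on the fixed state $\rho$ and, by the orthogonality choice, clicks with probability zero. Hence the outcome is deterministically (`no click', `click') and $\tr[W^{A\prec B}S_{0}]=1$. The case $W^{B\prec A}$ is symmetric under the conjugation by $P_{\pi}$ and yields $\tr[W^{B\prec A}S_{1}]=1$.

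The main obstacle is the existence of the witnessing pair $(|\psi\rangle,|\xi\rangle)$ with $|U^{\top}\psi\rangle,|U^{\top}\xi\rangle\perp\mathrm{supp}(\rho)$, which requires $\mathrm{supp}(\rho)\neq\A_{I}$. For pure $\rho$, as in Fig.~\ref{fig:perfectstrategy}, this is immediate; if the statement is intended in full generality for every $\rho\in\Omega(\A_{I})$, one would instead need a genuinely correlated non-signalling strategy built from local ancillas and a joint measurement, and verify both the non-signalling condition and the saturation in Theorem~\ref{th:holevo-helstrom} via a more delicate SDP-type argument.
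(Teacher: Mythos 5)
Your construction is correct as far as it goes, but it only proves the proposition for rank--deficient $\rho$, whereas the statement (via Eq.~\eqref{eq:process-matrix}) allows an arbitrary $\rho\in\Omega(\A_I)$ --- in particular a full--rank state such as the maximally mixed one. Your entire mechanism hinges on finding $|\psi\rangle,|\xi\rangle$ with $|U^{\top}\psi\rangle,|U^{\top}\xi\rangle\in\mathrm{supp}(\rho)^{\perp}$, which is impossible when $\mathrm{supp}(\rho)=\A_I$; you acknowledge this in your last paragraph but defer the general case to an unspecified ``more delicate SDP-type argument,'' so the proof as written has a genuine gap. (The restriction to pure $\rho$ is also not licensed by the figure or the statement.)

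The missing idea is that one does not need to detect an ``impossible'' input state; one can instead extract a classical record from $\rho$ and compare the two parties' records. The paper's proof does exactly this, still with a \emph{product} strategy $\Phi_A\otimes\Phi_B$: Alice measures $\A_I$ in the eigenbasis $\{|x_i\rangle\}$ of $\rho$, stores the outcome $i$ in a classical flag, and re-prepares $\bar U|x_i\rangle$ on $\A_O$; Bob does the same but first applies the cyclic shift $P_\sigma=\sum_i|x_{i+1\bmod d}\rangle\langle x_i|$. In the order $A\prec B$ the flags read $(i,i+1)$, while in $B\prec A$ they coincide, for every eigenvalue branch and hence for every spectrum of $\rho$. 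The final binary measurement $Q_1=\sum_i\proj{i}\otimes\proj{i}$ on the flags then discriminates deterministically. If you want to salvage your argument, you need to replace the orthogonality witness by a construction of this kind (or restrict the proposition to non-full-rank $\rho$, which would weaken it relative to the paper).
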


\begin{proof}
Let us consider the process matrix given by Eq.~\eqref{eq:process-matrix} described by Fig.~\ref{fig:perfectstrategy}. 
 W.l.o.g. let  $d$ be a dimension of each of the systems. 
	Let  $\rho = \sum_{i=0}^{d-1} \lambda_i \proj{x_i}$, where $\lambda_i \ge 0 $ such that $\sum_i \lambda_i = 1$. Based on the spectral decomposition of $\rho$ we create the unitary matrix $V$ by taking  $i$-th eigenvector of $\rho$,  and the measurement $\Delta_V$ (in basis of $\rho$) given by
	\begin{equation}
	\Delta_V (X)= \sum_{i=0}^{d-1} \bra{x_i} X\ket{x_i } \proj{i} \otimes \proj{x_i}.  
	\end{equation}
	Let us also define 
	the permutation matrix $P_\sigma  = \sum_{i=0}^{d-1}\ketbra{x_{{i +1} \text{ mod d} }}{x_i}$
	corresponding to the permutation $\sigma = (0,1,\ldots,d-1)$. 
	
	Alice and Bob prepare theirs discrimination strategy.
	 Alice performs the local channel  (see Fig.~\ref{fig:alicestrategy}) given by 
	\begin{equation}\label{eq:alice-strategy}
	\Phi_{A}(\rho) = \left( \left(\mathcal{I} \otimes \mathrm{Ad}_{\bar{U}} \right) \circ \Delta_V\right)(\rho),
	\end{equation} 
	Meanwhile, Bob performs his local channel  (see Fig.~\ref{fig:bobstrategy}) given by \begin{equation}\label{eq:bob-strategy}
\Phi_{	B}(\rho) = \left(   \left(\mathcal{I} \otimes   \mathrm{Ad}_{\bar{U}} \right) \circ \Delta_V \circ  \mathrm{Ad}_{P_\sigma} \right)(\rho). 
	\end{equation} 
	\begin{figure}[h!]
		\includegraphics[scale=1]{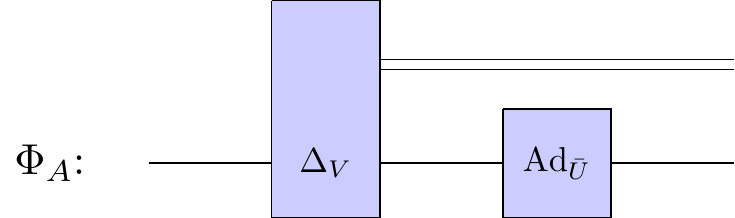}
		\caption{A schematic representation of Alice's discrimination strategy described by Eq.~\eqref{eq:alice-strategy}.}
		\label{fig:alicestrategy}
	\end{figure}
	\begin{figure}[h!]
		\includegraphics[scale=1]{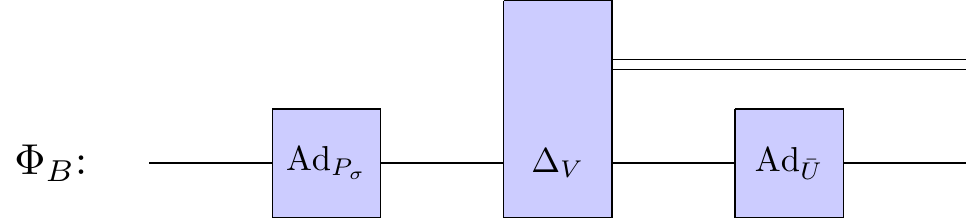}
		\caption{A schematic representation of  Bobs' discrimination strategy described by  Eq.~\eqref{eq:bob-strategy}.}
		\label{fig:bobstrategy}
	\end{figure} 

Let us consider the  case $A \prec B$. 
The output after Alice's action is described by \begin{equation}
\Phi_{A}(\rho) = \sum_{i} \lambda_i \proj{i} \otimes \bar{U} \proj{x_i} U^\top. 
\end{equation}
Next, we apply the quantum channel $\mathrm{Ad}_{U^\top}$ (see Fig.\ref{fig:perfectstrategy}), and hence we have \begin{equation}
(\mathcal{I}\otimes\mathrm{Ad}_{U^\top}) \circ \Phi_{A}(\rho)  = \sum_{i} \lambda_i \proj{i} \otimes\proj{x_i}. 
\end{equation}
In the next step, Bob applies his channel as follows
\begin{equation}
(\mathcal{I}  \otimes \Phi_{B}) \circ (\mathcal{I}\otimes\mathrm{Ad}_{U^\top}) \circ \Phi_{A}(\rho) = \sum_{i} \lambda_i \proj{i} \otimes \proj{i+1}\otimes\proj{x_{i+1}}. 
\end{equation}
Finally, we apply partial trace operation on the subspace $\B_O$  (see Fig.\ref{fig:perfectstrategy}), that means 
\begin{equation}
\tr_{\B_O} \left( \sum_{i} \lambda_i \proj{i} \otimes \proj{i+1}\otimes\proj{x_{i+1}} \right) = \sum_{i} \lambda_i \proj{i} \otimes \proj{i+1}. 
\end{equation}
So, the quantum state obtained after the discrimination scenario in the case $A \prec B$ is given by 
\begin{equation}
\sigma^{A \prec B} = \sum_{i} \lambda_i \proj{i} \otimes \proj{i+1}. 
\end{equation}
It implies that if Alice measures her system, she obtains the label $i$ with probability $\lambda_i$ whereas Bob obtains the label $(i+ 1) \mod  d$ with the same probability. 
On the other hand, considering the case 
$B \prec A$, then   the state obtained after the discrimination scenario is given by 
\begin{equation}
\sigma^{B \prec A} = \sum_{i} \lambda_i \proj{i+1} \otimes \proj{i+1}. 
\end{equation}
  So, Bob and Alice obtain  the same label $(i+1) \mod d $ with probability $\lambda_i$.
Then, the quantum channel $\Phi_K$  (realizing the discrimination strategy $S$) is created as a tensor product of Alice's and Bob's local channels, that means
$\Phi_K =  \Phi_A  \otimes \Phi_B$. 
	  Due to that they perform the binary measurement 
	$Q= \{ Q_0, Q_1 \}$, where the effect $Q_1$  is given by 
	$Q_1 = \sum_{i=0}^{d-1} \proj{i} \otimes \proj{i}$. Hence,  we have 
	\begin{equation}
	p_{\text{succ}}(W^{A \prec B}, W^{B \prec A}) = \frac{1}{2} \tr\left( \sigma^{A \prec B} Q_0\right) + \frac{1}{2} \tr\left( \sigma^{B \prec A} Q_1\right) = 1. 
 	\end{equation}
		In summary,  the process matrices $W^{A \prec B}$ and $W^{B \prec A}$ are perfectly distinguishable by Alice and Bob which completes the proof.

 \end{proof}

\section{SDP program for calculating the optimal probability of process matrices discrimination} \label{spd-prob}

In the standard approach, we would need
to compute the probability of correct discrimination between two process matrices $ W_0 $ and $W_1$. For this purpose, we  use  the
semidefinite programming (SDP). 
This section presents the SDP program for calculating the optimal probability of discrimination between $W_0$ and $W_1$. 

Recall that the maximum value   of such a probability can be noticed by
\begin{equation}\label{eq-p-succ}
p_{\text{succ}} (W_0, W_1) = \frac{1}{2} \max_{S= \{S_0, S_1 \}} \left[ \tr(W_0 S_0) + \tr(W_1 S_1) \right], 
\end{equation}
with requirement that the optimal strategy $S = \{S_0, S_1 \}$ is  a quantum instrument such that $S_0 + S_1 \in \mathbf{NS}(\A_I  \otimes \A_O  \otimes \B_I \otimes \B_O)$. 
Hence, we arrive at the primal and dual problems presented in the Program~\ref{sdp}.  
To optimize this problem we used the 
\texttt{Julia} 
programming language along with quantum package 
\texttt{QuantumInformation.jl}\cite{Gawron2018} and SDP optimization via SCS 
solver~\cite{ocpb:16, scs} with absolute convergence tolerance $10^{-5}$. The code is available on GitHub~\cite{code22}.

It may happen  that the values of 
primal and dual programs
are equal. This situation is called strong duality.
Slater's theorem provides the set of conditions which guarantee strong duality~\cite{watrous2018theory}. 
It can be shown that  Program~\ref{sdp} fulfills conditions of Slater's theorem (it is suffices to take $ Y_0,Y_1 = 0 $  and $\alpha > \frac{1}{2}  \max \{ \Lambda^{\text{max}}(W_0), \Lambda^{\text{max}}(W_1) \}$, where $\Lambda^{\text{max}}(X)$ is the maximum eigenvalue of $X$). Therefore, we can consider the primal and the dual problem equivalently. 
 \newpage

\begin{table}[ht!]
	\centerline{\underline{SDP program for calculating the optimal probability of discrimination between $W_0$ and $W_1$}} 
	\vspace{4mm}
	\centerline{\underline{Primal problem}} \vspace{-4mm}
	\begin{equation*}
	\begin{split}
	\text{maximize:}\quad &
	\frac{1}{2}  \Tr (W_0S_0) + \frac{1}{2}\Tr(W_1S_1) 
	\\[2mm]
	\text{subject to:}\quad & 
	\tr_{\A_O} (S_0 + S_1) = \frac{\Id_{\A_I}}{\dim(\A_I)} \otimes \tr_{\A_O\A_I} (S_0 + S_1),\\
	&\tr_{\B_O} (S_0 + S_1) = \frac{\Id_{\B_I}}{\dim(\B_I)} \otimes \tr_{\B_O\B_I} (S_0 + S_1), \\
	&\tr (S_0 + S_1) = \dim(\A_I)\dim(\B_I), \\
	& S_0 \in \mathrm{Pos}(\A_I  \otimes \A_O  \otimes \B_I \otimes \B_O), \\
	& S_1 \in \mathrm{Pos}(\A_I  \otimes \A_O  \otimes \B_I \otimes \B_O). 
	\end{split}
	\end{equation*}
	
	\hspace{2cm}
	
	\centerline{\underline{Dual problem}}\vspace{-4mm}
	\begin{equation*}
	\begin{split}
	\text{minimize:}\quad & 
	\alpha	\cdot \dim(\A_I)\dim(\B_I)  
	\\[2mm]
	\text{subject to:}\quad &
	\Id_{\A_O} \otimes Y_0 - \frac{\Id_{\A_I\A_O}}{\dim(\A_I)} \otimes \tr_{\A_I}(Y_0) + 
	\Id_{\B_O} \otimes Y_1 + \\& - \frac{\Id_{\B_I\B_O}}{\dim(\B_{I})} \otimes \tr_{\B_I}Y_1 + \alpha \cdot \Id_{\A_I\A_O\B_I\B_O}  \ge \frac{1}{2} W_0, \\
	&	\Id_{\A_O} \otimes Y_0 - \frac{\Id_{\A_I\A_O}}{\dim(\A_I)} \otimes \tr_{\A_I}(Y_0) + 
	\Id_{\B_O} \otimes Y_1 +  \\& - \frac{\Id_{\B_I\B_O}}{\dim(\B_{I})} \otimes \tr_{\B_I}Y_1 +  \alpha \cdot \Id_{\A_I\A_O\B_I\B_O} \ge \frac{1}{2} W_1, \\ 
	& Y_0 \in \text{ Herm}( \A_I \otimes \B_I \otimes \B_O), \\
	& Y_1 \in \text{ Herm}(\A_I \otimes \A_O   \otimes \B_I), \\ 
	& \alpha \in \R.
	\end{split}
	\end{equation*}
	
	\caption{Semidefinite program for maximizing the probability of correct discrimination between two process matrices $W_0$ and $W_1$.}
	\label{sdp}
\end{table}

\section{Distance between process matrices} \label{sec-sdp-distance}
In this section we  present the semidefinite programs for calculating   the distance  in trace norm between a given process matrix $W \in  \mathbf{W^{PROC}}$ and  
different subsets of process matrices, such that 
$ \mathbf{W^{A || B}}$ , $ \mathbf{W^{A \prec B}}$, $ \mathbf{W^{B  \prec A}}$ or $\mathbf{W^{SEP}}$.

For example, let us consider the case $ \mathbf{W^{A || B}}$. 
Theoretically, the distance between a process matrix $W$ and the set of free process matrices $ \mathbf{W^{A || B}}$  can be expressed by 
\begin{equation}\label{dist-free}
\begin{split}
& \text{dist}\left(W, \mathbf{W^{A || B}} \right)  =  \\ &\min_{\widetilde{W} \in \mathbf{W^{A || B}}} 
\max \left\{ \| \sqrt{N} (W - \widetilde{W}) \sqrt{N} 
\|_1: N \in \mathbf{NS}(\A_I \otimes \A_O \otimes \B_I \otimes \B_O)  \right\}.
\end{split}
\end{equation} Analogously, for the sets $\mathbf{W^{A \prec B}}, \mathbf{W^{B \prec A}} $ and $\mathbf{W^{SEP}}$ with the minimization condition  $\min_{\widetilde{W} \in \mathbf{W^{A \prec B}}}$, $\min_{\widetilde{W} \in \mathbf{W^{B \prec A}}}$, $\min_{\widetilde{W} \in \mathbf{W^{SEP}}}$, respectively.
Due to the results obtained from the previous section (see Program~\ref{sdp}) and Slater theorem
we are able to note the Eq.~\eqref{dist-free} to SDP problem presented in the  Program~\ref{sdp-distance}. 
We use the
SDP optimization via SCS 
solver~\cite{ocpb:16, scs} with absolute convergence tolerance $10^{-8}$ and relative convergence tolerance $10^{-8}$. 
The implementations of SDPs in the Julia language are available on GitHub \cite{code22}.

\begin{table}[ht!]
	
	\centerline{\underline{SDP calculating the distance between a process matrix $W$ and  the set $\Upsilon$.}} \vspace{-4mm}
	\begin{equation*}
	\begin{split}
	\text{minimize:}\quad & 
	4 	 \dim(\A_I)\dim(\B_I) \alpha -2  
	\\[2mm]
	\text{subject to:}\quad &
	\Id_{\A_O} \otimes Y_0 - \frac{\Id_{\A_I\A_O}}{\dim(\A_I)} \otimes \tr_{\A_I}(Y_0) + 
	\Id_{\B_O} \otimes Y_1 +  \\& - \frac{\Id_{\B_I\B_O}}{\dim(\B_{I})} \otimes \tr_{\B_I}Y_1 + \alpha \cdot \Id_{\A_I\A_O\B_I\B_O}  \ge \frac{1}{2} W, \\
	&	\Id_{\A_O} \otimes Y_0 - \frac{\Id_{\A_I\A_O}}{\dim(\A_I)} \otimes \tr_{\A_I}(Y_0) + 
	\Id_{\B_O} \otimes Y_1 +  \\& - \frac{\Id_{\B_I\B_O}}{\dim(\B_{I})} \otimes \tr_{\B_I}Y_1 +  \alpha \cdot \Id_{\A_I\A_O\B_I\B_O} \ge \frac{1}{2}W^{*},\\ 
	& W^{*} \in \Upsilon, \\
	& Y_0 \in \text{ Herm}( \A_I \otimes \B_I \otimes \B_O), \\
	& Y_1 \in \text{ Herm}(\A_I \otimes \A_O   \otimes \B_I), \\ 
	& \alpha \in \R.
	\end{split}
	\end{equation*}
	
	\caption{Semidefinite program for computation  the distance between a process matrix $W$ and $\Upsilon$, which can be one of the set $ \mathbf{W^{A || B }},  \mathbf{W^{A \prec B}}, \mathbf{W^{B \prec A }}$ or $\mathbf{W^{SEP}} $. Depending on the selected set
		we include additional constrains to SDP described by Eq.~\eqref{def-free} for $\mathbf{W^{A || B }}$, Eq.~\eqref{def-comb}  for  $\mathbf{W^{A \prec B}}$ and $\mathbf{W^{B \prec A}}$ or Eq.~\eqref{def-sep} for $\mathbf{W^{SEP}} $.  }
	\label{sdp-distance}
\end{table}

  \subsection{Example}
Let $\A_I=\A_O=\B_I=\B_O=\C^2$. 
Let us consider 
a causally non-separable process matrix comes from \cite{oreshkov2012quantum} of the form
\begin{equation}\label{eq:cns}
W^{\text{CNS}} = \frac{1}{4} \left[\Id_{\A_I\A_O\B_I\B_O} + \frac{1}{\sqrt{2}} \left( \sigma_z^{\A_O} \sigma_z^{\B_I}  \otimes \Id_{\A_I\B_O} +  \sigma_z^{\A_I} \sigma_x^{\B_I} \sigma_z^{\B_O} \otimes \Id_{\A_O} \right) \right], 
\end{equation}
where $\sigma_x^\XX, \sigma_z^\XX$ are Pauli matrices on space $\mathrm{L}(\XX)$.  
 We have calculated the distance in trace norm between $W^{\text{CNS}}$ and  different subset of process matrices. Finally, we obtain
\begin{equation}
\text{dist}\left( W^{\text{CNS}},\mathbf{W^{A || B }}\right)\approx 1.00000001\approx 1,
\end{equation} 
\begin{equation}
\text{dist}\left( W^{\text{CNS}}, \mathbf{W^{A \prec B }} \right) \approx 0.7071068 \approx \frac{\sqrt{2}}{2},
\end{equation}
\begin{equation}
\text{dist}\left( W^{\text{CNS}}, \mathbf{W^{B \prec A }} \right) \approx 0.7071068 \approx \frac{\sqrt{2}}{2},
\end{equation}
\begin{equation}
\text{dist}\left( W^{\text{CNS}}, \mathbf{W^{SEP }} \right) \approx 0.2928932 \approx 1-\frac{\sqrt{2}}{2}. 
\end{equation}
The numerical computations  give us some intuition  about the geometry of the set of process matrices. Those results are presented in Fig.~\ref{fig:distance}. 
Moreover, by using $W^{\text{CNS}} $ given by Eq.~\eqref{eq:cns} it can be shown that the set of all causally non-separable process matrices is not convex. 
To show this fact, it suffices to observe that for every $\sigma_i, \sigma_j, \sigma_k, \sigma_l \in \{ \sigma_x, \sigma_y, \sigma_z, \Id \}$ the following equation holds \begin{equation}\label{eq:sigma}
\left(\sigma_i \otimes \sigma_j \otimes \sigma_k \otimes \sigma_l \right) W^{\text{CNS}} ( \sigma_i \otimes \sigma_j \otimes \sigma_k \otimes \sigma_l)^\dagger \in \mathbf{W^{CNS}}. 
\end{equation} 
Simultaneously, the average of the process matrices of the form Eq.~\eqref{eq:sigma} distributed uniformly states $\frac{1}{4} \Id_{\C^{16}} $, however  $ \frac{1}{4} \Id_{\C^{16}} \not\in \mathbf{W^{CNS}} $. It implies that the set $\mathbf{W^{CNS}}$ is not convex. 

\begin{figure}[h!]
	\includegraphics[scale=3.5]{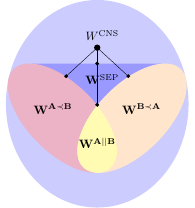}
	\caption{A schematic representation of the distances between $W^{\text{CNS}}$ defined in Eq.~\eqref{eq:cns} and the sets $\mathbf{W^{A || B }}$, $\mathbf{W^{A \prec B }}$, $\mathbf{W^{B \prec A }}$ and $\mathbf{W^{SEP}}$. }
	\label{fig:distance}
\end{figure}

\section{Convex cone structure theory} \label{sec-convex}
From geometrical point of view, we present an alternative way to 
derive of Eq.~\eqref{eq:prob-exact}. It turns out that the task of process matrices discrimination is strictly connected with the convex cone structure theory.  To keep this work self-consistent, the details of convex cone structure theory 
are presented in Appendix~\ref{app:convex}.

Let $\VV$ be a finite dimensional real vector space with a proper cone $\CC \subset \VV$. 
A base $B$ of the proper cone $\CC$ is a compact convex subset $B \subset \CC$
such that
each nonzero element  $c \in \mathcal{C} $ has a unique representation in the form $c = \alpha \cdot b$, where $\alpha > 0 $ and $b \in B$. 
The corresponding base norm in $\VV$ is defined by 
\begin{equation}
||x||_B = \{ \alpha + \beta, x = \alpha b_1 - \beta b_2, \alpha, \beta \ge 0, b_1, b_2 \in B \}. 
\end{equation}	
From \cite[Corollary 2]{jenvcova2014base}  the author showed that  the base norm can be written as 
\begin{equation}
|| x ||_B =  \sup_{\widetilde{b}\in \widetilde{B}} ||\widetilde{b}^{1/2}  x \widetilde{b}^{1/2}  ||_1, 
\end{equation}
where \begin{equation}
\widetilde{B} \coloneqq \{ Y  \in \CC: \tr \left( XY \right) = 1, \forall X \in B  \}. 
\end{equation}

\subsection{Convex cone structure of process matrices set}

Let  $\VV$  be a Hilbert space  given by 
\begin{equation}
\VV = \mathrm{Herm}(\A_I\otimes \A_O \otimes \B_I \otimes \B_O),
\end{equation}
with proper cone  \begin{equation}
\CC = \{ W \in \VV: W \in \mathrm{Pos}(\A_I\otimes \A_O \otimes \B_I \otimes \B_O)  \}.
\end{equation}
Consider the linear subspace $\s \subset \VV$  given by \begin{equation} \begin{split}
\s  = \{W \in \VV :  &\prescript{}{B_IB_O}{W} = \prescript{}{A_OB_IB_O}{W}, \\ 
&\prescript{}{A_IA_O}{W} = \prescript{}{A_OA_IB_O}{W}, \\
&\prescript{}{A_OB_O}{W} = \prescript{}{B_O}{W} + \prescript{}{A_O}{W} - W  \}
\end{split}
\end{equation}
together with its proper cone $\CC_\s$. Observe that if we fix  trace of $W \in \CC_\s$ such that $ \tr(W) = \dim(\A_O) \cdot \dim(\B_O)$,  we achieve the 
set of all process matrices $ \mathbf{W^{PROC}}$. And then, $ \mathbf{W^{PROC}}$ is a base of 
$\CC_\s$. 

\begin{proposition}\label{no-signalling}
	Let $\mathbf{W^{PROC}}$ be the set of  process matrices. Then,  the set $\widetilde{\mathbf{W^{PROC}}}$ 
	is  determined by 
	\begin{equation}
	\widetilde{\mathbf{W^{PROC}}} =  \mathbf{NS}(\A_I\otimes \A_O \otimes \B_I \otimes \B_O).
	\end{equation}
\end{proposition}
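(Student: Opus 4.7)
The plan is to reduce the proposition directly to the characterization already recalled in Eq.~\eqref{eq:nonsignalling}: for any family $\{\Phi_{N_{ij}^{AB}}\}$ of CP maps, setting $S \coloneqq \sum_{ij} N_{ij}^{AB}$, one has $\tr(WS)=1$ for every $W \in \mathbf{W^{PROC}}$ if and only if $S \in \mathbf{NS}(\A_I \otimes \A_O \otimes \B_I \otimes \B_O)$. Applied to the trivial one-outcome ``instrument'' $\{Y\}$, this characterization gives precisely the asserted equality.

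I would then write out the two inclusions separately for clarity. For $\mathbf{NS}(\A_I\otimes \A_O \otimes \B_I \otimes \B_O) \subseteq \widetilde{\mathbf{W^{PROC}}}$, take any $N$ in the left-hand side. Since a non-signaling channel is in particular CPTP, $N \geq 0$, so $N \in \CC$. The cited characterization then gives $\tr(WN)=1$ for every $W \in \mathbf{W^{PROC}}$, hence $N \in \widetilde{\mathbf{W^{PROC}}}$. Conversely, take $Y \in \widetilde{\mathbf{W^{PROC}}}$. By definition $Y \geq 0$ and $\tr(WY)=1$ for all $W \in \mathbf{W^{PROC}}$; viewing $\{Y\}$ as a single-outcome instrument, the same characterization forces $Y \in \mathbf{NS}(\A_I \otimes \A_O \otimes \B_I \otimes \B_O)$.

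The only nontrivial content is the already-cited equivalence between ``valid strategy'' and ``non-signaling Choi sum'', which is precisely the linear-algebraic duality between the defining constraints of process matrices (the projection $L_V$) and the non-signaling constraints (marginal factorization together with trace preservation). If one wanted a self-contained argument, the main obstacle would be proving this duality explicitly: decompose $\VV = \s \oplus \s^{\perp}$, observe that $\tr(WY)$ depends only on the projection of $Y$ onto $\s$, and test against a spanning family of process matrices (for instance, product process matrices $\rho_{\A_I\B_I}\otimes \Id_{\A_O\B_O}$ together with comb-type examples $W^{A\prec B}$ and $W^{B \prec A}$) to determine the linear subspace spanned by $\widetilde{\mathbf{W^{PROC}}}$ inside $\mathrm{Herm}$. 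Combining this with the positivity constraint $Y \geq 0$ and Chiribella's affine decomposition $N = \sum_i \lambda_i \Phi_{S_i}\otimes \Phi_{T_i}$, $\sum_i \lambda_i = 1$, recalled right after the definition of $\mathbf{NS}$, would then yield the characterization used above.
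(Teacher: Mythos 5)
Your proposal is correct, but it takes a shortcut the paper does not: you reduce the proposition to the characterization recalled in Eq.~\eqref{eq:nonsignalling} (the Ara\'ujo et al.\ observation that $\tr(WS)=1$ for all $W\in\mathbf{W^{PROC}}$ iff $S\in\mathbf{NS}$), specialized to a one-outcome instrument. Since that equivalence is stated in the preliminaries, your two inclusions do follow immediately, and the argument is logically valid — but it places the entire mathematical content of the proposition into a citation, so as written it is closer to a restatement than a proof. The paper instead proves the equivalence directly, using exactly the two ingredients you defer to in your ``self-contained'' sketch: for $\mathbf{NS}\subseteq\widetilde{\mathbf{W^{PROC}}}$ it invokes Chiribella's affine decomposition $X=\sum_i\lambda_i A_i\otimes B_i$ with $\sum_i\lambda_i=1$ and uses normalization of the probabilities $\tr\bigl((A_i\otimes B_i)W\bigr)=1$; for the converse it does not need your full $\s\oplus\s^{\perp}$ decomposition or a spanning family, but only tests $Y$ against the one-way comb-type process matrices $W=\Id_{\A_O}\otimes J$ and $W=\Id_{\B_O}\otimes K$ (with $J,K$ Choi matrices of channels), which already forces $\tr_{\A_O}Y=\Id_{\A_I}\otimes P$ and $\tr_{\B_O}Y=\Id_{\B_I}\otimes P'$, i.e.\ the two non-signaling marginal conditions. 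So your fallback plan is sound but somewhat heavier than necessary: the positivity of $Y$ and the product-state tests alone suffice for the converse, and no appeal to causally non-separable or genuinely ``spanning'' families is needed. If you intend the citation-based version as the final proof, you should at least note explicitly that a single positive operator $Y$ is the Choi matrix of a CP map, so the one-outcome specialization of Eq.~\eqref{eq:nonsignalling} is legitimate.
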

\begin{proof}
	We want to prove that \begin{equation}
	\tr \left( X W \right) = 1 
	\,\, \text{for all } W \in \mathbf{W^{PROC}}
	\iff X \in  \mathbf{NS}(\A_I\otimes \A_O \otimes \B_I \otimes \B_O).
	\end{equation}   Let us first take $X \in \mathbf{NS}(\A_I\otimes \A_O \otimes \B_I \otimes \B_O) $.  Then, from \cite[Lemma 1]{chiribella2013quantum}, 
	we note \begin{equation}
	X = \sum_{i} \lambda_i A_i \otimes 
	B_i,
	\end{equation} where $A_i \in \mathbf{NS}(\A_I\otimes \A_O) $,  $B_i \in \mathbf{NS}(\B_I\otimes \B_O)$ and $\lambda_i \in \R$ such that $\sum_{i} \lambda_i = 1$. From 
	definition of process matrix and linearity we obtain \begin{equation}
	\tr \left( X 
	W \right) = \tr \left( \sum_i \lambda_i (A_i \otimes B_i)  W \right) = 
	\sum_i \lambda_i \tr \left(( A_i \otimes B_i ) W \right) = \sum_{i} 
	\lambda_i 
	= 1. 	\end{equation}
	To prove opposite implication, let us take $W = 
	\Id_{\A_O} \otimes J$, where $J$ is the Choi 
	matrix of quantum channel $\Phi_J: \mathrm{L}(\B_I \otimes \B_O) \rightarrow \mathrm{L} (\A_I)$. Then, we have
	\begin{equation}
	1=\tr \left( 	W X \right) = \tr\left(  (\Id_{\A_O} \otimes J) X  
	\right) = 
	\tr \left( J \tr_{\A_O} X   \right).
	\end{equation} 
	From \cite{lewandowska2020optimal} we have 
	\begin{equation}
	\tr_{\A_O} X = \Id_{\A_I} \otimes P, 
	\end{equation}
	where $P \in \mathrm{Pos}(\B_I \otimes \B_O)$.	
	Similarly, if we take $W \coloneqq \Id_{\B_O} \otimes K$, where $K$  is the  Choi 
	matrix of a quantum channel $\Phi_K: \mathrm{L}(\A_I \otimes \A_O) \rightarrow \mathrm{L} (\B_I)$,  we 
	obtain \begin{equation}
	\tr_{\B_O} X = \Id_{\B_I} \otimes P, 
	\end{equation}
	where $P \in \mathrm{Pos}(\A_I \otimes \A_O)$. 
	It implies that $X \in  \mathbf{NS}(\A_I\otimes \A_O \otimes \B_I \otimes \B_O)$,  which completes the proof. 
\end{proof}
Due to Proposition~\ref{no-signalling},  we immediately
obtain
the following corollary. 
\begin{corollary}
	The base norm  $||\cdot||_\mathbf{W^{PROC}}$ between two process matrices 
	$W_1, W_2 \in \mathbf{W^{PROC}}$ can be expressed as 
	\begin{equation}
	|| W_1 - W_2 ||_{\mathbf{W^{PROC}} } 
	= \max \{ ||\sqrt{N} (W_1 - W_2) \sqrt{N} ||_1: N \in \mathbf{NS}(\A_I\otimes \A_O \otimes \B_I \otimes \B_O) \}.
	\end{equation}
\end{corollary}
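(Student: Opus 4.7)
The plan is to treat this corollary as an essentially immediate consequence of the two ingredients already assembled in Section~\ref{sec-convex}: the general expression of the base norm in terms of the dual base, namely
\begin{equation*}
\|x\|_B = \sup_{\widetilde{b}\in \widetilde{B}} \|\widetilde{b}^{1/2} x \widetilde{b}^{1/2}\|_1
\end{equation*}
borrowed from \cite[Corollary 2]{jenvcova2014base}, and the identification $\widetilde{\mathbf{W^{PROC}}} = \mathbf{NS}(\A_I\otimes \A_O \otimes \B_I \otimes \B_O)$ established in Proposition~\ref{no-signalling}. So the proof will essentially be direct substitution once the bookkeeping is in place.

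First I would verify that the abstract base-norm machinery truly applies in our situation, with $\VV = \mathrm{Herm}(\A_I\otimes \A_O \otimes \B_I \otimes \B_O)$, the proper cone $\CC_\s \subset \s$ introduced above, and $\mathbf{W^{PROC}}$ playing the role of the distinguished base. Since $W_1, W_2 \in \mathbf{W^{PROC}} \subset \s$, the difference $W_1 - W_2$ belongs to $\s$ and Jen\v{c}ov\'a's formula applies verbatim, giving
\begin{equation*}
\|W_1 - W_2\|_{\mathbf{W^{PROC}}} = \sup_{\widetilde{b} \in \widetilde{\mathbf{W^{PROC}}}} \bigl\|\widetilde{b}^{1/2}(W_1 - W_2)\widetilde{b}^{1/2}\bigr\|_1.
\end{equation*}
Next I would substitute the characterization of $\widetilde{\mathbf{W^{PROC}}}$ supplied by Proposition~\ref{no-signalling}, turning the index set of the supremum into $\mathbf{NS}(\A_I\otimes \A_O \otimes \B_I \otimes \B_O)$ as claimed.

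The only non-tautological step remaining is promoting the $\sup$ to a $\max$. For this I would invoke compactness: the set of non-signaling Choi matrices is cut out of the cone of positive operators by a finite collection of affine equalities, and the partial-trace conditions force their trace to the fixed value $\dim(\A_I)\dim(\B_I)$, hence this set is closed and bounded in a finite-dimensional space, therefore compact. The map $N \mapsto \|\sqrt{N}(W_1-W_2)\sqrt{N}\|_1$ is continuous (composition of the continuous square root on $\mathrm{Pos}$, a linear map, and the trace norm), so it attains its supremum on this compact set.

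I do not expect any serious obstacle. The one point to check carefully is the normalization convention: that the trace condition $\tr(W) = \dim(\A_O)\dim(\B_O)$ for a process matrix pairs correctly with the trace condition on non-signaling Choi matrices to yield $\tr(NW) = 1$ (which is what makes $\mathbf{NS}$ live in $\widetilde{\mathbf{W^{PROC}}}$ in the first place). This compatibility is already implicit in the proof of Proposition~\ref{no-signalling}, so once one trusts that proposition the corollary follows by pure substitution and the compactness remark above.
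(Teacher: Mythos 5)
Your proposal is correct and follows exactly the route the paper intends: the paper derives this corollary "immediately" from Proposition~2 together with the Jen\v{c}ov\'a base-norm formula $\|x\|_B = \sup_{\widetilde{b}\in\widetilde{B}}\|\widetilde{b}^{1/2}x\widetilde{b}^{1/2}\|_1$, which is precisely your substitution argument. Your added compactness remark justifying the replacement of $\sup$ by $\max$ is a small but legitimate detail the paper leaves implicit.
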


 \section{Conclusion and discussion} \label{sec-conclusion}
 
 In this work, we studied  the problem of  single shot
 discrimination between process matrices. 
 Our aim was to provide an exact expression for the optimal probability of correct distinction and quantify it in terms of the
 trace norm.  This value was maximized over all Choi operators of non-signaling channels and 
 and poses direct analogues to the Holevo-Helstrom theorem for quantum channels.
 In addition, we have presented an alternative way to achieve this expression by using the convex cone structure theory.
 As a valuable by-product, we have also found  the optimal realization of the discrimination task for process matrices that use such non-signalling channels. 
 Additionally, we expressed the discrimination task as semidefinite programming (SDP). 
 Due to that, we have created SDP calculating the distance between process matrices and 
 we expressed it in terms of the
 trace norm.  
 Moreover, we found
 an analytical result for discrimination of free process matrices. It turns out that the task of discrimination between free process matrices can be reduced to the task of discrimination between quantum states. 
 Next, we consider the problem of discrimination for process matrices corresponding to quantum combs. 
 We have studied which strategy, adaptive or non-signalling, 
 should be used during the discrimination task.
 We proved that no matter which strategy you choose, the optimal probability of distinguishing two process matrices being a quantum comb is the same.
So, it turned out that we do not need to use some unknown additional processing in this case.
 Finally,   we discovered a particular class of 
 process matrices having opposite causal order, which can be distinguished perfectly. 
  This work paves the way toward a complete description of
  necessary
  and sufficient criterion for perfect discrimination between process matrices.
  Moreover, it poses a starting point to fully describe the geometry of the set of process matrices, particularly causally non-separable process matrices.

\section*{Acknowledgements}

This work was supported by the project ,,Near-term quantum computers Challenges, 
optimal implementations and applications'' under
Grant Number POIR.04.04.00-00-17C1/18-00, which is carried out within the 
Team-Net programme of the
Foundation for Polish Science co-financed by the European Union under the 
European Regional
Development Fund and  SONATA BIS grant number 2016/22/E/ST6/00062.

PL is a holder of European Union scholarship through the European Social Fund, 
grant InterPOWER (POWR.03.05.00-00-Z305).

\bibliographystyle{ieeetr}
\bibliography{discrimination}

\appendix

\section{Convex cone structures}\label{app:convex}

To keep this work self-consistent we
present in this appendix  basic definitions and properties about convex cone structure theory. 

Suppose $\XX$ is a finite dimensional real vector space   and $\CC \subset \XX$ 
is  a  closed  convex  cone. We  assume  that $\CC$ is pointed,  that means  $\CC 
\cap 
-\CC = \{ 0 \}$.
A closed pointed convex cone is in one-to-one correspondence
with partial order in $\XX$, by $
x \ge y \iff x-y  \in \CC
$
for each $x,y \in \XX$. If we additionally assume that  the cone is generating,  that is for each $ x \in \XX$ there exists $ u, 
w 
\in \CC$ such that $
x = u - w$, then
a nonempty set $\CC \subseteq \XX$ satisfying all above properties will be called a  
proper cone in space $\XX$. 
Let  $\XX^*$  be a  dual space with duality $\braket{\cdot}{\cdot}$.  Then, 
we  introduce a partial order in $\XX^*$ as well with dual cone $	\CC^* = 
\{f \in \XX^*: \braket{f}{z} \ge 0, \forall z \in C \}.$
Observe that the cone $\CC^*$ is also closed and convex. Moreover, if $\CC$ is generating in 
space $\XX$, then $\CC^*$ is pointed, so
we can introduce  the  partial order in $\XX^*$ given by
\begin{equation}
f \ge g \iff f-g \in \CC^*
\end{equation}
for all $f,g \in \XX^*$. 

Next,  consider a linear space  with fixed inner product. If $\XX$ is an inner product space, then the Riesz representation theorem~\cite{rudin1991functional} 
holds that 
the inner product determines an  isomorphism between $\XX$ and $\XX^*$. 
Therefore, the cone $\CC$ is equal to $\CC^*$.

  An interior point $e \in \mathrm{int}(\CC)$ of a cone $\CC$ is called an 
order unit if for each $ x \in \XX$, there exists  $ \lambda >0 $ such that $
\lambda e - x \in \CC$. Whereas, a base of $\CC$ is defined as compact and convex 
subset 
$B \subset \CC$ such that for every $z \in \CC\setminus\{0\}$, there exists 
unique $t > 0$  and an element $b \in B$ such that $
z = tb. $ 
It can be shown that the set  \begin{equation}
B= \{z \in \CC: \braket{e}{z} = 1 \} 
\end{equation} is the 
base of $\CC$ 
	(determined by element $e$) if and only if an element $e$  is an order unit 
	and $e \in \mathrm{int}\left(\CC^*\right)$.
	Finally, we define the base norm as
\begin{equation}
||x||_B = \{ \alpha + \beta, x = \alpha b_1 - \beta b_2, \alpha, \beta \ge 0, b_1, b_2 \in B \}. 
\end{equation}
 It can be shown~\cite{jenvcova2014base}  that the base norm is expressed as 
\begin{equation}
||x||_B = 
\sup_{\widetilde{b} \in \widetilde{B} } || \widetilde{b}^{1/2} x 
\widetilde{b}^{1/2} ||_1, 
\end{equation}
where 
$
\widetilde{B} =  \{ \widetilde{b} \in \CC : \tr(b\widetilde{b} ) = 
1, \forall b \in B \}. 
$

\end{document}